\documentclass[journal]{IEEEtran} 

\usepackage[colorlinks=true, allcolors=blue]{hyperref}

\usepackage[utf8]{inputenc} 
\usepackage[T1]{fontenc}    
\usepackage{hyperref}       
\usepackage{url}            
\usepackage{booktabs}       
\usepackage{amsfonts}       
\usepackage{nicefrac}       
\usepackage{microtype}      
\usepackage{xcolor}         

\usepackage{hyperref}
\usepackage{amsmath, mathtools, amsfonts, bm, amssymb, amsthm,graphicx, caption, subcaption,multirow}
\usepackage{algorithm} 

\usepackage{algpseudocode}

\usepackage{bm, bbm}
\usepackage{color}

\begin{document}

\newcommand{\x}{\bm{x}}
\newcommand{\g}{\bm{g}}
\newcommand{\G}{\bm{G}}
\newcommand{\X}{{\bm{X}}}
\newcommand{\s}{\bm{s}}
\renewcommand{\S}{\mathcal{S}}
\newcommand{\vv}{\bm{v}}
\renewcommand{\c}{c}

\newcommand{\e}{\bm{e}}
\newcommand{\y}{\bm{y}}
\newcommand{\w}{\bm{w}}
\newcommand{\U}{{\bm{U}}}
\newcommand{\R}{{\bm{R}}}
\newcommand{\M}{\bm{M}}
\renewcommand{\H}{\bm{H}}
\newcommand{\calG}{\mathcal{G}}

\newcommand{\I}{\bm{I}}
\newcommand{\Y}{\bm{Y}}
\newcommand{\uhat}{{\bm{\u}}}
\newcommand{\Zhat}{{\bm{\hat{Z}}}}
\newcommand{\Z}{{\bm{{Z}}}}
\newcommand{\D}{{\bm{D}}}
\newcommand{\DD}{{\bm{F}}}
\newcommand{\Q}{{\bm{Q}}}
\newcommand{\F}{{\bm{F}}}
\newcommand{\iidsim}{\stackrel{\mathrm{iid}}{\thicksim }}
\newcommand{\n}{{\cal{N}}}
\newcommand{\indepsim}{\stackrel{\mathrm{indep.}}{\thicksim }}
\newcommand{\SE}{\mathrm{SD}}  
\newcommand{\dist}{\mathrm{dist}}
\newcommand{\tta}{\ta^{\text{trunc}}}
\newcommand{\A}{\bm{A}}
\newcommand{\full}{{\mathrm{full}}}

\newcommand{\sg}{{\scalebox{.6}{{(g)}}}}
\newcommand{\sgp}{{\scalebox{.6}{{(g')}}}}
\newcommand{\sone}{{\scalebox{.6}{{(1)}}}}
\newcommand{\stwo}{{\scalebox{.6}{{(2)}}}}
\newcommand{\sL}{{\scalebox{.6}{{(L)}}}}
\newcommand{\con}{\mathrm{con}}
\newcommand{\epsfin}{\eps_{fin}}

\newcommand{\sj}{{\scalebox{.6}{{(j)}}}}
\newcommand{\tildeb}{\tilde{b}}
\newcommand{\gradU}{\mathrm{GradU}} 
\newcommand{\hatgradU}{\widehat{\mathrm{GradU}}}
\newcommand{\conserr}{\mathrm{ConsErr}}
\newcommand{\err}{\mathrm{Err}}
\newcommand{\Uerr}{\mathrm{UErr}}

\newcommand{\Span}{\mathrm{span}}
\newcommand{\rank}{\mathrm{rank}}
\newcommand{\evdeq}{\overset{\mathrm{EVD}}=} 
\newcommand{\svdeq}{\overset{\mathrm{SVD}}=} 
\newcommand{\qreq}{\overset{\mathrm{QR}}=} 
\newcommand{\bi}{\begin{itemize}} \newcommand{\ei}{\end{itemize}}
\newcommand{\ben}{\begin{enumerate}} \newcommand{\een}{\end{enumerate}}
\newcommand{\vs}{\vspace{0.1in}}
\newcommand{\vsl}{\vspace{0.05in}}
\newcommand{\vsm}{\vspace{-0.1in}}

\renewcommand{\implies}{\Rightarrow}


\newcommand{\cblue}{} 
\newcommand{\cbl}{\color{black}}
\newcommand{\cred}{}
\newcommand{\skipit}{ }

\newcommand{\trace}{\mathrm{trace}}

\newcommand{\qfull}{q_\full}
\newcommand{\sub}{{\mathrm{sub}}}

\newcommand{\bbf}{\mathbb{F}}
\newcommand{\dsW}{\mathds{W}}
\newcommand{\dsZ}{\mathds{Z}}

\newcommand{\mtx}[1]{\mathbf{#1}}
\newcommand{\vct}[1]{\mathbf{#1}}
\newcommand{\abs}[1]{\left|#1\right|}
\newcommand{\p}{\bm{p}}
\renewcommand{\j}{\bm{j}}
\newcommand{\uu}{{\bm{u}}}
\newcommand{\tot}{\mathrm{tot}}
\renewcommand{\a}{\bm{a}}
\newcommand{\ta}{\bm{\tilde{a}}}
\newcommand{\h}{\bm{h}}
\renewcommand{\b}{\bm{b}}
\renewcommand{\d}{\bm{d}}
\newcommand{\B}{\bm{B}}
\newcommand{\V}{{\bm{V}}}
\newcommand{\W}{{\bm{W}}}

\renewcommand{\aa}{\bm{a}}

\newcommand{\bP}{{\bm{P}}}

\newcommand{\z}{\bm{z}}
\newcommand{\zstar}{\z^\star}
\newcommand{\indic}{\mathbbm{1}}
\newcommand{\one}{\mathbm{1}}

\newcommand{\C}{\bm{C}}
\newcommand{\Chat}{\bm{\hat{C}}}
\newcommand{\cb}{\bm{c}}

\newcommand{\bz}{\boldsymbol{z}}

\newcommand{\tc}{\tilde{c}}
\newcommand{\tC}{\tilde{C}}

\setlength{\arraycolsep}{0.01cm}

\newcommand{\Bstar}{{\B^\star}}   
\newcommand{\bstar}{\b^\star}             
\newcommand{\dstar}{\d^\star}             

\newcommand{\tb}{\rho}

\newcommand{\Vcheck}{\check{\V}}
\newcommand{\Bcheck}{\check{\V}}
\newcommand{\bcheck}{\check{\v}}

\newcommand{\xhat}{\hat\x}
\newcommand{\Bhat}{\hat\B}
\newcommand{\Xhat}{\hat\X}

\newcommand{\bhat}{\hat\b}
\newcommand{\Uhat}{\hat\U}

\newcommand{\Utilde}{\widetilde\U}
\newcommand{\td}{\tilde{\bm{d}}^\star}
\newcommand{\init}{{\mathrm{init}}}

\newcommand{\Ustar}{\U^\star{}}
\newcommand{\Xstar}{\X^\star{}}
\newcommand{\xstar}{\x^\star}
\newcommand{\sstar}{\s^\star}
\newcommand{\Sstar}{\S^\star}
\newcommand{\Vstar}{\V^\star{}}

\newcommand{\deltinit}{\delta_\init}
\newcommand{\deltapt}{\delta_{t}}
\newcommand{\deltaptplus}{\delta_{t+1}}

\newcommand{\bSigma}{{\bm\Sigma^*}}
\newcommand{\tSigma}{\bm{E}_{det}}
\newcommand{\sigmin}{{\sigma_{\min}^\star}}
\newcommand{\sigmax}{{\sigma_{\max}^\star}}

\newcommand{\ik}{{ki}}
\newcommand{\J}{\mathcal{J}}

\renewcommand{\P}{\bm{P}}
\newcommand{\proj}{\mathcal{P}}
\newcommand{\E}{\mathbb{E}}
\newcommand{\norm}[1]{\left\|#1\right\|}

\renewcommand{\P}{\bm{U}}
\newcommand{\Phat}{\hat\P} 
\newcommand{\Lam}{\bm\Lambda} 
\renewcommand{\L}{\bm{L}}
\renewcommand{\V}{\bm{V}}

\renewcommand{\l}{\bm{\ell}}
\renewcommand{\v}{\bm{v}}
\newcommand{\tty}{\tilde\y}

\newcommand{\lhat}{\hat\l}

\newcommand{\at}{\a_t}
\newcommand{\yt}{\y_t}
\newcommand{\lt}{\l_t}
\newcommand{\xt}{\x_t}
\newcommand{\vt}{\v_t}
\newcommand{\et}{\e_t}

\newcommand{\bea}{\begin{eqnarray}}
\newcommand{\eea}{\end{eqnarray}}

\newcommand{\nn}{\nonumber}
\newcommand{\ds}{\displaystyle}

\newtheorem{theorem}{Theorem}[section]
\newtheorem{prop}[theorem]{Proposition}
\newtheorem{lemma}[theorem]{Lemma}
\newtheorem{claim}[theorem]{Claim}
\newtheorem{corollary}[theorem]{Corollary}
\newtheorem{fact}[theorem]{Fact}
\newtheorem{definition}[theorem]{Definition}
\newtheorem{remark}[theorem]{Remark}
\newtheorem{example}[theorem]{Example}
\newtheorem{sigmodel}[theorem]{Model}
\newtheorem{assu}[theorem]{Assumption}
\renewcommand\thetheorem{\arabic{section}.\arabic{theorem}}

\newcommand{\snr}{\text{SNR}}

\newcommand{\tk}{\tilde{k}}
\newcommand{\tl}{{\ell}}
\newcommand{\totl}{L}

\newcommand{\bfpara}[1]{ {\bf #1. }} 
\newcommand{\Item}{\item} 

\newcommand{\SEF}{\SE_F}

\renewcommand{\P}{\bm{P}}
	\newcommand{\kron}{\otimes}
\newcommand{\Uvec}{{\U_{vec}}}
\newcommand{\Zvec}{{\Z_{vec}}}
\newcommand{\ym}{{\y_{(mag)}}}
\renewcommand{\forall}{\text{ for all }}

\newcommand{\eps}{\epsilon}
\newcommand{\ev}{\mathcal{E}}

\newcommand{\Section}[1]{\section{#1}} 
\newcommand{\Subsection}[1]{\subsection{#1}} 

\renewcommand{\bhat}{\b}  \renewcommand{\Bhat}{\B}
\renewcommand{\xhat}{\x}  \renewcommand{\Xhat}{\X}

\newcommand{\deltaFt}{\delta_{F,t}}
\newcommand{\deltaFzero}{\delta_{F,0}}

\newcommand{\deltaF}{{\delta^F}}

\newcommand{\sigmamin}{\sigma_{\min}} \newcommand{\sigmamax}{\sigma_{\max}}

\newcommand{\inperr}{\mathrm{InpErr}}

\newcommand{\epsconscalar}{\eps_{\con,sc}}

\newcommand{\trnc}{\mathrm{trnc}}
\newcommand{\inp}{\mathrm{in}}

\newcommand{\errph}{ \ \mathrm{ErrPh} \ }

\newcommand{\mx}{\mathrm{mx}}

\title{Fast Low Rank column-wise Sensing: short proof}
\title{A Simple Proof for Fast and Sample-Efficient Federated Low Rank Matrix Recovery from column-wise Linear Projections}
\title{A Simple Proof for \\ Efficient Federated Low Rank Matrix Recovery from column-wise Linear Projections}
\title{Efficient Federated Low Rank Matrix Recovery via Alternating GD and Minimization: A Simple Proof}
\author{Namrata Vaswani \\
Iowa State University, Ames, IA, USA. \\
Email: namrata@iastate.edu
}
\maketitle

\begin{abstract}
This note provides a significantly simpler and shorter proof of our sample complexity guarantee for solving the low rank column-wise sensing problem using the Alternating Gradient Descent (GD) and Minimization (AltGDmin) algorithm. AltGDmin was developed and analyzed for solving this problem in our recent work. We also provide an improved guarantee.
\end{abstract}

\section{Introduction}
We study the low rank column-wise sensing (LRCS) problem which involves recovering a low rank matrix from independent compressive measurements of each of its columns. This problem occurs in dynamic MRI \cite{lrpr_gdmin_mri_jp} and in multi-task linear representation learning for few-shot learning \cite{du2020few}.
The alternating gradient descent (GD) and minimization (AltGDmin) algorithm for solving it in a fast, communication-efficient and private fashion was developed and analyzed in our recent work \cite{lrpr_gdmin}. 
%
This short paper provides a significantly simpler and shorter proof of our sample complexity guarantee for AltGDmin. In fact, it also improves the sample complexity needed by the AltGDmin iterations by a factor of $r$. 

\section{Problem statement, notation, and algorithm}
\subsection{Problem statement and assumption and notation}
The goal is to recover an $n \times q$ rank-$r$ matrix $\Xstar =[\xstar_1, \xstar_2, \dots, \xstar_q]$ from  $m$ linear projections (sketches) of each of its $q$ columns, i.e. from
\bea
\y_k := \A_k \xstar_k, \ k  \in [q]
\label{ykvec}
\eea
where each $\y_k$ is an $m$-length vector,  $[q]:=\{1,2,\dots, q\}$, and the measurement/sketching matrices $\A_k$ are mutually independent and known. The setting of interest is  low-rank (LR), $r \ll \min(n,q)$, and undersampled measurements, $m < n$. Each $\A_k$ is assumed to be random-Gaussian: each entry of it is independent and identically distributed (i.i.d.) standard Gaussian.
Let
	$
	\Xstar \svdeq \Ustar  \bSigma \Vstar: = \Ustar \Bstar
	$
	denote its reduced (rank $r$) SVD, and $\kappa:= \sigmax/\sigmin$ the condition number of $\bSigma$. We let $\Bstar:= \bSigma \Vstar$. 

Since no measurement $\y_\ik$ is a global function  of the entire matrix, $\Xstar$, we need the following assumption, borrowed from LR matrix completion literature, to make our problem well-posed (allow for correct interpolation across columns).
	
\begin{assu}[Incoherence of right singular vectors] \label{right_incoh}
Assume that $\|\bstar_k\|^2 \le \mu^2 r \sigmax^2 / q$ for a numerical constant $\mu$.
\end{assu}

In our discussion of communication complexity and privacy, we assume a vertically federated setting: different subsets of $\y_k, \A_k$ are available at different nodes.

\subsubsection{Notation}
We use $\|.\|_F$ to denote the Frobenius norm,  $\|.\|$ without a subscript for the (induced) $l_2$ norm, $^\top$ to denote matrix or vector transpose, $\e_k$ to denote the $k$-th canonical basis vector ($k$-th column of $\I$), and $\M^\dag := (\M^\top \M)^{-1} \M^\top$. 
For two $n \times r$ matrices $\U_1, \U_2$ that have orthonormal columns, we use
\[
\SE_2(\U_1, \U_2) : = \|(\I  - \U_1 \U_1^\top) \U_2\|
\]
as the Subspace Distance (SD) measure. 
In our previous work \cite{lrpr_gdmin}, we used the Frobenius norm SD,
\[
\SEF(\U_1, \U_2) : = \|(\I  - \U_1 \U_1^\top) \U_2\|_F.
\]
Clearly, $\SE_F(\U_1, \U_2) \le \sqrt{r} \SE_2(\U_1, \U_2)$.
We reuse the letters $c,C$ to denote different numerical constants in each use with the convention that $c < 1$ and $C \ge 1$. We use $\sum_k$ as a shortcut for the summation over $k=1$ to $q$ and $\sum_\ik$ for the summation over $i=1$ to $m$ and $k=1$ to $q$. {\em We use whp to refer to ``with high probability'' and this means that the claim holds with probability (w.p.) at least $ 1- n^{-10}$.}

\subsection{Review of AltGDmin algorithm \cite{lrpr_gdmin}}
AltGDmin, summarized in Algorithm \ref{gdmin}, imposes the LR constraint by factorizing the unknown matrix $\X$ as $\X = \U \B$ with $\U$ being an $n \times r$ matrix and $\B$ an $r \times q$ matrix. It minimizes $
f(\U,\B): = \sum_{k=1}^q \|\y_k - \U \b_k\|^2
$ as follows:%
\ben
\item {\em Truncated spectral initialization:} Initialize $\U$ (see below).
\item At each iteration, update $\B$ and $\U$ as follows:
\ben
\item {\em Minimization for $\B$:} keeping $\U$ fixed, update $\B$ by solving $\min_{\B} f(\U, \B)$. Due to the form of the LRCS model, this minimization decouples across columns, making it a cheap least squares problem of recovering $q$ different $r$ length vectors. It is solved as $\b_k = (\A_k \U)^\dag \y_k$ for each $k \in [q]$.%

\item {\em Projected-GD for $\U$:} keeping $\B$ fixed, update $\U$ by a GD step, followed by orthonormalizing its columns: 
    $\U^+  = QR(\U - \eta \nabla_{\U} f(\U,\B)))$. Here $QR(.)$ orthonormalizes the columns of its input.
\een
\een
We initialize $\U$ by computing the top $r$ singular vectors of
\[
\X_0 := \sum_k \A_k^\top \y_{k,trnc} \e_k^\top, \ \y_{k,trnc}:= \mathrm{trunc}(\y_k,\alpha) 
\]
Here $ \alpha:= \tilde{C} \sum_k \|\y_k\|^2 / mq$ with $\tilde{C}:=9\kappa^2\mu^2$. The function $\mathrm{trunc}$ truncates (zeroes out) all entries of $\y_k$ with magnitude greater than $\sqrt\alpha$, i.e.,  for all $j\in [n]$, $\mathrm{trunc}(\y,\alpha)_j = (\y)_j  \indic_{|\y_j| \le  \sqrt\alpha})$, with $\indic$ being the indicator function.

Sample-splitting is assumed, i.e., each new update of $\U$ and $\B$ uses a new independent set of measurements and measurement matrices, $\y_k, \A_k$.

 The use of minimization to update $\B$ at each iteration is what helps ensure that we can show exponential error decay with a constant step size. At the same time, due to the column-wise decoupled nature of LRCS, the time complexity for this step is only as much as that of computing one gradient w.r.t. $\U$. Both steps need time\footnote{The LS step time is $\max(q \cdot mnr, q \cdot mr^2)=mqnr$ (maximum of the time  needed for computing $\A_k \U$ for all $k$, and that for obtaining $\b_k$ for all $k$)  while the GD step time is $\max( q \cdot mnr, nr^2) =mqnr$ (maximum of the time needed for computing the gradient w.r.t. $\U$, and time for the QR step).}of order $mqnr$. This is only $r$ times more than ``linear time'' (time needed to read the algorithm inputs, here $\y_k,\A_k$'s). To our knowledge, $r$-times linear-time is the best known time complexity for any algorithm for any LR matrix recovery problem. Moreover, due to the use of the $\X=\U\B$ factorization, AltGDmin is also communication-efficient. Each node needs to only send $nr$ scalars (gradients w.r.t $\U$) at each iteration.%

\begin{algorithm}[t]
\caption{\small{The AltGD-Min algorithm.   
}}
\label{gdmin}
\begin{algorithmic}[1]
   \State {\bfseries Input:} $\y_k, \A_k, k \in [q]$

\State {\bfseries Sample-split:} Partition the measurements and measurement matrices into $2T+2$ equal-sized disjoint sets: two sets for initialization and $2T$ sets for the iterations. Denote these by $\y_k^{(\tau)}, \A_k^{(\tau)}, \tau=00,0,1,\dots 2T$.

  \State {\bfseries Initialization:}
\State Using $\y_k \equiv \y_k^{(00)}, \A_k \equiv \A_k^{(00)}$,
\State set $\alpha \leftarrow \tC \frac{1}{mq}\sum_\ik\big|\y_\ik\big|^2$,

\State Using $\y_k \equiv \y_k^{(0)}, \A_k \equiv \A_k^{(0)}$,
\State  set $\y_{k,trunc}(\alpha) \leftarrow \y_{k,trnc}:= \mathrm{trunc}(\y_k,\alpha)$, 
\State  set $\ds  \Xhat_{0} \leftarrow (1/m) \sum_{k \in [q]} \A_k^\top \y_{k,trunc}(\alpha) \e_k^\top$

%
\State   set $\U_0 \leftarrow $ top-$r$-singular-vectors of $\Xhat_0$
\State {\bfseries GDmin iterations:}

   \For{$t=1$ {\bfseries to} $T$}

   \State  Let $\U \leftarrow \U_{t-1}$.
\State Using $\y_k \equiv \y_k^{(t)}, \A_k \equiv \A_k^{(t)}$,
\State  set $\bhat_k   \leftarrow  (\A_k \U)^\dagger \y_k$, $\xhat_k \leftarrow \U \bhat_k$ for all $k \in [q]$

\State 
Using $\y_k \equiv \y_k^{(T+t)}, \A_k \equiv  \A_k^{(T+t)}$, compute
\State set $\nabla_\U f(\U, \B) =  \sum_k \A_k^\top (\A_k \U \bhat_k - \y_k) \bhat_k ^\top$
%
\State  
set $\ds \Uhat^+   \leftarrow \U - (\eta/m) \nabla_\U f(\U, \B_t)$.

 \State 
 compute $\Uhat^+ \qreq \U^+ \R^+$.
 \State Set $\U_t \leftarrow \U^+$.
     \EndFor
\end{algorithmic}
\end{algorithm}

\section{New Guarantee}
Let $m_0$ denote the total number of samples per column needed for initialization and let $m_1$ denote this number for each GDmin iteration. Then, the total sample complexity per column is $m = m_0  + m_1 T$.
Our guarantee given next provides the required minimum value of $m$.
\begin{theorem}
Assume that Assumption \ref{right_incoh} holds. Set $\eta = 0.4 / m \sigmax^2 $ and $T = C \kappa^2 \log( 1 /\eps)$.
If
\[
m q \ge C \kappa^4 \mu^2 (n+q) r ( \kappa^4 r + \log (1/\eps) ) 
\]
and $m \ge C \max(\log n, \log q, r) \log (1/\eps)$, then, with probability (w.p.) at least $1 - n^{-10}$,
\[
\SE_2(\U,\Ustar) \le \eps \text{ and } \|\x_k - \xstar_k\| \le \eps \|\xstar_k\| \ \text{for all $k \in [q]$}. 
\]
%
The time complexity is $mqnr \cdot T =   mqnr \cdot \kappa^2 \log( 1 /\eps)$. The communication complexity is $nr\cdot T = nr  \cdot \kappa^2 \log( 1 /\eps)$ per node.
\label{main_res}
\end{theorem}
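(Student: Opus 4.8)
The plan is to split the argument into an initialization phase and an iteration phase, joined by induction on $t$. In the \emph{initialization phase} I will show that, on an event of high probability, $\SE_2(\U_0,\Ustar) \le \delta_0$ for a small constant $\delta_0 = c_0/\kappa^2$ (the ``basin of attraction''). In the \emph{iteration phase} I will show that, again on a high-probability event, one GDmin iteration contracts in this metric: whenever $\SE_2(\U_{t-1},\Ustar) \le \delta_0$ we get $\SE_2(\U_t,\Ustar) \le (1 - c/\kappa^2)\,\SE_2(\U_{t-1},\Ustar)$. Composing over $t = 1,\dots,T$ gives $\SE_2(\U_T,\Ustar) \le (1-c/\kappa^2)^T\,\delta_0 \le \eps$ once $T \ge C\kappa^2\log(1/\eps)$, which is the claimed subspace bound; the column-wise bound $\|\x_k - \xstar_k\| \le \eps\|\xstar_k\|$ then drops out of the $\B$-step estimate applied at the last iterate, using $\|\xstar_k\| = \|\bstar_k\|$. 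Sample-splitting is what makes the induction clean: at each step I condition on the current iterate and treat the fresh $\{\A_k\}$ used in that step as independent of it, so the relevant expectations are exact. The stated failure probability $1 - n^{-10}$ then comes from union-bounding the $O(T)$ iterations, the $O(q)$ per-column least-squares events, and the net-based concentration estimates below --- the conditions $m \gtrsim r,\log q,\log n$ guarantee that each holds with the required probability.

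\emph{Initialization.}
I would first show that $\alpha$ concentrates around $\tC\,\|\Xstar\|_F^2/q$, so the truncation threshold is $\sqrt\alpha \asymp \kappa\mu\sqrt{r/q}\,\sigmax$; by Assumption~\ref{right_incoh} this threshold is large enough relative to $\|\xstar_k\|$ that, conditioning on $\alpha$, the $k$-th column of $\E[\Xhat_0]$ equals $\beta_k\,\xstar_k$ with each $\beta_k\in[c,C]$. Hence $\E[\Xhat_0] = \Ustar\Bstar\D$ with $\D = \mathrm{diag}(\beta_1,\dots,\beta_q) \succeq c\,\I$, so $\mathrm{span}(\E[\Xhat_0]) = \mathrm{span}(\Ustar)$ and $\sigma_r(\E[\Xhat_0]) \ge c\,\sigmin$. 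Next I would bound $\|\Xhat_0 - \E[\Xhat_0]\|$ by a net argument: for fixed unit $\w,\z$, the scalar $\w^\top\Xhat_0\z = \tfrac1m\sum_{i,k} z_k\,(\a_{ik}^\top\w)\,(\text{truncated }\y_{ik})$ is a sum of subexponential terms --- each is $|z_k|$ times a standard Gaussian $\a_{ik}^\top\w$ times a quantity of magnitude at most $\sqrt\alpha$ --- so a Bernstein bound plus an $\eps$-net over $\w\in S^{n-1}$ and $\z\in S^{q-1}$ gives $\|\Xhat_0 - \E[\Xhat_0]\| \le c\,\sigmin$ with high probability as soon as $mq \gtrsim \kappa^8\mu^2(n+q)r^2$ (up to logs), which is the $\kappa^2 r$ summand of the stated bound. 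Wedin's $\sin\theta$ theorem then yields $\SE_2(\U_0,\Ustar) \le 2\|\Xhat_0 - \E[\Xhat_0]\|/\sigma_r(\E[\Xhat_0]) \le \delta_0$.

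\emph{One iteration.}
Fix $\U := \U_{t-1}$, write $\delta := \SE_2(\U,\Ustar) \le \delta_0$, and set $\h_k := (\I - \U\U^\top)\xstar_k$, so $\|\h_k\| \le \delta\|\bstar_k\|$. In the \textbf{$\B$-step}, $\b_k = (\A_k\U)^\dagger\A_k\xstar_k = \U^\top\xstar_k + (\A_k\U)^\dagger(\A_k\h_k)$; since $\h_k \perp \mathrm{col}(\U)$, $\A_k\h_k$ is independent of $\A_k\U$, so the Gaussian bounds $\sigma_{\min}(\A_k\U) \ge c\sqrt m$ and $\|\A_k\h_k\| \le C\sqrt m\,\|\h_k\|$ (simultaneously over $k$ when $m \gtrsim \max(r,\log q)$) give $\|\x_k - \xstar_k\| = \|\U\b_k - \xstar_k\| \le C\delta\|\bstar_k\|$, and, summing, $\|\U\B - \Xstar\|_F \le C\delta\|\Bstar\|_F$, $\sigma_{\min}(\B) \ge 0.9\,\sigmin$ and $\|\B\| \le 1.1\,\sigmax$. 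In the \textbf{$\U$-step}, decompose $\tfrac1m\nabla_\U f = \G_{\mathrm{det}} + \G_{\mathrm{pert}}$ where $\G_{\mathrm{det}} := (\U\B - \Xstar)\B^\top$ is exactly the conditional expectation over the fresh $\{\A_k\}$, and where $\|\G_{\mathrm{pert}}\| \le \eps_1\,\sigmin^2\,\delta$ w.h.p. by a Bernstein-plus-net argument over $n\times r$ test pairs --- it is here that Assumption~\ref{right_incoh}, through $\max_k\|\bstar_k\|$, controls the per-$k$ subexponential parameters and forces the per-iteration count $m_1 q \gtrsim \kappa^4\mu^2(n+q)r$, which over $T$ iterations gives the $\log(1/\eps)$ summand. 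Writing $\Uhat^+ = \U(\I - \eta'\B\B^\top) + \eta'\Xstar\B^\top - \eta'\G_{\mathrm{pert}}$ with $\eta' \asymp 1/\sigmax^2$ the effective step, I would use $\SE_2(\U^+,\Ustar) = \|(\I - \Ustar\Ustar^\top)\U^+\| \le \|(\I - \Ustar\Ustar^\top)\Uhat^+\|/\sigma_{\min}(\Uhat^+)$ (valid since $QR$ does not change the column span) together with $(\I - \Ustar\Ustar^\top)\Xstar = 0$: the numerator is $\le \delta\,\|\I - \eta'\B\B^\top\| + \eta'\|\G_{\mathrm{pert}}\| \le \delta\,(1 - c/\kappa^2)$ (using $\sigma_{\min}(\B)^2 \asymp \sigmin^2$ and $\eps_1$ a small constant), while the denominator is $\ge 1 - C_1\delta \ge 1 - C_1\delta_0$; choosing the basin constant $c_0$ in $\delta_0 = c_0/\kappa^2$ small enough that the $c/\kappa^2$ contraction in the numerator dominates the $C_1 c_0/\kappa^2$ slack in the denominator yields $\SE_2(\U_t,\Ustar) \le (1 - c'/\kappa^2)\,\SE_2(\U_{t-1},\Ustar)$.

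\emph{The main obstacle.}
I expect the crux to be the concentration of the gradient perturbation $\G_{\mathrm{pert}} = \sum_k (\tfrac1m\A_k^\top\A_k - \I)(\x_k - \xstar_k)\b_k^\top$: its factors $\x_k - \xstar_k$ and $\b_k$ depend on the $\B$-step measurements $\A_k^{(t)}$ (though not on the $\A_k^{(T+t)}$ appearing inside $\tfrac1m\A_k^\top\A_k$), so the argument must first condition on the $\B$-step randomness, control $\max_k\|\b_k\|$, $\max_k\|\x_k - \xstar_k\|$ and $\|\U\B - \Xstar\|_F$ via incoherence, and then run a net argument tight enough to deliver the $(n+q)r$ scaling --- rather than $(n+q)r^2$ or worse --- with the right powers of $\kappa$. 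The analogous truncation bookkeeping in the initialization (verifying $\beta_k \in [c,C]$ from the choice $\tC = 9\kappa^2\mu^2$ and Assumption~\ref{right_incoh}, and the concentration of $\alpha$) is the other place where care is needed. The remaining ingredients --- the least-squares and $\sin\theta$ estimates, the $QR$ bookkeeping, and the union bounds --- are routine.
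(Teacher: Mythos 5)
Your overall architecture — truncated‑spectral initialization bounded by Wedin, a per‑iteration decomposition of $\tfrac1m\nabla_\U f$ into the conditional mean $(\U\B-\Xstar)\B^\top$ and a Bernstein‑plus‑net perturbation, and the ratio bound $\SE_2(\U^+,\Ustar)\le\|(\I-\Ustar\Ustar^\top)\Uhat^+\|/\sigma_{\min}(\Uhat^+)$ followed by induction on $t$ — is the same as the paper's. The $\B$‑step estimates and the gradient concentration are also essentially the paper's (Lemma~\ref{bhat_lemma}, Theorem~\ref{Blemma}, Lemma~\ref{grad_bnd_new}).

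However, there is a genuine gap in the basin‑of‑attraction claim, and it is exactly where the $\kappa^2 r$ summand of the theorem comes from. You take $\delta_0 = c_0/\kappa^2$ and assert the denominator in the ratio bound satisfies $\sigma_{\min}(\Uhat^+) \ge 1 - C_1\delta$. But $\sigma_{\min}(\Uhat^+) \ge 1 - \eta\|\gradU\|$ and the dominant piece of $\|\gradU\|$ is $\|\E[\gradU]\| = m\|(\X-\Xstar)\B^\top\|$, which you yourself only control through the Frobenius norm: $\|\X-\Xstar\| \le \|\X-\Xstar\|_F \le C\delta\|\Bstar\|_F \le C\delta\sqrt r\,\sigmax$. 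This puts a $\sqrt r$ into the denominator slack, $\sigma_{\min}(\Uhat^+) \ge 1 - C\sqrt r\,\delta$, and in the contraction factor $0.8 - C\sqrt r\,\kappa^2\delta_t$, so the whole argument collapses unless $\delta_t \lesssim 1/(\sqrt r\,\kappa^2)$. The paper therefore sets $\delta_0 = 0.02/(\sqrt r\,\kappa^2)$ (Theorem~\ref{gd_claim}, and the first lines of Section~\ref{proof_main_res}), and it is this smaller basin, pushed through Theorem~\ref{init_claim}, that produces the initialization cost $m_0 q \gtrsim \kappa^8\mu^2(n+q)r^2$. Your own stated initialization count $mq\gtrsim\kappa^8\mu^2(n+q)r^2$ is in fact only consistent with $\delta_0\sim 1/(\sqrt r\,\kappa^2)$, not with $\delta_0 = c_0/\kappa^2$: with the latter the Wedin step would only force $mq\gtrsim\kappa^8\mu^2(n+q)r$, so your proposal is internally inconsistent on this point. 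To repair it you must either (a) replace $c_0/\kappa^2$ by $c_0/(\sqrt r\,\kappa^2)$ throughout, which is what the paper does, or (b) prove an operator‑norm bound $\|\X-\Xstar\| \lesssim \delta\sigmax$ without the $\sqrt r$, which would require a separate concentration argument for the random matrix $\B-\G$ that neither you nor the paper carries out.
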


\begin{proof} 
We prove the three results needed for proving this in Sections \ref{proofs_min}, \ref{proofs_gd}, and \ref{proofs_init} below. We use these to prove the above result in Appendix \ref{proof_main_res}.%
\end{proof}

\subsection{Discussion}
We use $a \gtrsim b$ to mean that $a \ge C_{\kappa,\mu} b$ where $C_{\kappa,\mu}$ includes terms dependent on $\kappa,\mu$.
Most of our discussion treats $\kappa,\mu$ as numerical constants that do not grow with $n,q,r$.
\cblue
This assumption is borrowed from the rich past literature on various other LR matrix recovery problems, e.g., see \cite{lowrank_altmin,rpca_gd}. 
\cbl
Also, below, whp means w.p. at least $ 1- n^{-10}$.


\cblue
In this work (as well as in other works that use matrix factorization to solve LR recovery problems), the goal is to obtain a bound of the form $\SE_2(\U,\Ustar) \le \eps$.  If a bound on $\SE_F$ is needed, one can use $\SE_F(\U,\Ustar) \le \sqrt{r}\SE_2(\U,\Ustar) \le \sqrt{r} \eps$. Using Lemma \ref{bhat_lemma} given below and triangle inequality, the bound on $\SE_2$ helps get the bound $\|\x_k - \xstar_k\| \lesssim \eps \|\xstar_k\|$.

Our older result from \cite{lrpr_gdmin} used $\SE_F(\U,\Ustar)$ in its analysis and showed that, to guarantee $\SE_F(\U,\Ustar) \le \eps$ whp, we need
\[
m q \ge C \kappa^4 \mu^2 (n+q) r^2 (\kappa^4+ \log (1/\eps) )  
\]
Since $\SE_2(\U,\Ustar) \le \SE_F(\U,\Ustar)$, this implies that we need this same complexity also to guarantee $\SE_2(\U,\Ustar) \le \eps$.
%
Our new result needs $m q \ge C \kappa^4 \mu^2 (n+q) r ( \kappa^4 r + \log (1/\eps) )$ to guarantee $\SE_2(\U,\Ustar) \le \eps$. Thus, this new result improves the dependence on $r,\eps$ from order $r^2 \log(1/\eps)$ to order $r \max(r,\log(1/\eps))$. This is an improvement over the old result by a factor of $\min(r,\log(1/\eps))$.
%
%
%
This improvement is obtained because our new guarantee for the GD step (Theorem \ref{gd_claim}) only needs $m q \gtrsim nr$ at each iteration. On the other hand, the older result needed $m q \gtrsim nr^2$. Both guarantees need $m q \gtrsim nr^2$ for initialization, see Theorem \ref{init_claim}.%

 \color{black}


We are able to improve our result because we now use a simpler proof technique that works for the LRCS problem, but not for its LR phase retrieval (LRPR) generalization. LRPR involves recovering $\Xstar$ from $\z_k:= |\y_k|, k \in [q]$. In \cite{lrpr_gdmin}, we were attempting to solve both problems.
There are two differences in our new proof compared with that of \cite{lrpr_gdmin}: (i) we use the 2-norm subspace distance $\SE_2(\U,\Ustar)$ instead of $\SE_F(\U,\Ustar)$, and (ii) we do not use the fundamental theorem of calculus  \cite{lan93,pr_mc_reuse_meas} for analyzing the GD step, but instead use a much simpler direct approach. If we only use (ii) but not (i), we will still get a simpler proof, but we will not get the sample complexity gain. 
In \cite{lrpr_gdmin}, we used the Frobenius norm SD because it helped obtain the desired $nr^3$ guarantee for LRPR\footnote{Our older guarantee for LRPR \cite{lrpr_gdmin} needed $mq \gtrsim nr^2 (r + \log(1/\eps))$. If we use the new approach developed here, it will need  $mq \gtrsim nr (r^3 + \log(1/\eps))$.}. 
%
Also, in hindsight, the use of the fundamental theorem of calculus was unnecessary. It has been used in earlier work \cite{pr_mc_reuse_meas} for analyzing a GD based algorithm for standard PR  and for LR matrix completion and that originally motivated us to adapt the same approach for LRCS and LRPR.

\cbl 
Both the current result and our old one have the same dependence on $\kappa$. After initialization, the sample complexity grows roughly as $\kappa^4$. A similar dependence on $\kappa$ also exists in all past sample complexity guarantees for iterative -- AlMin or GD -- algorithms for LR matrix sensing, LR matrix completion or robust PCA, e.g., see \cite{lowrank_altmin,rpca_gd}; and also for our older work on  AltMin for a generalization of LRCS, LR phase retrieval \cite{lrpr_best}.
One way to improve this dependence is to develop a stage-wise algorithm similar to that introduced in \cite{lowrank_altmin,rmc_gd} for LR matrix sensing and completion respectively. Developing this for the current LRCS problem is an open future work question. A second option is to use the convex relaxation approaches which usually depend on lower powers of $\kappa$. However, these need a much higher iteration complexity, making them much slower.



 \color{black}

\section{New proof: GDmin iterations} \label{proofs}

\subsection{Definitions and preliminaries}
Let $\U$ be the estimate at the $t$-th iteration. Define
\begin{align*}
\g_k &  := \U^\top \xstar_k, k \in [q], \text{ and } \G:= \U^\top \Xstar, \\
\P &  := \I - \Ustar \Ustar^\top, \\
\gradU &  :=  \nabla_\U f(\U,\Bhat) = \sum_k \A_k^\top (\A_k \U \bhat_k - \y_k) \bhat_k ^\top  \\
& =\sum_\ik (\y_\ik - \a_\ik{}^\top \U \bhat_k) \a_\ik \bhat_k{}^\top
\end{align*}

For an $n_1 \times n_2$ matrix, $\Z$, $\sigmamin(\Z)=\sigmamin(\Z^\top) =\sigma_{\min(n_1,n_2)}(\Z)$. Thus, if $\A$ is tall, then $\sigmamin(\A)= \sqrt{\lambda_{\min}(\A^\top \A)}$. Using this, it follows that, if $\A = \B \C$ and $\A$ and $\B$ are tall (or square), then  $\sigmamin(\A) \ge \sigmamin(\B) \sigmamin(\C)$.


\subsection{Minimization step}\label{proofs_min}

Assume $\SE_2(\U,\Ustar) \le \delta_t$ with $\delta_t < 0.02$.

We use the following lemma from \cite{lrpr_gdmin}.
\begin{lemma}[\cite{lrpr_gdmin}]
\label{bhat_lemma}
Let 
$\g_k: = \U^\top \xstar_k.$
Then, w.p. at least $1 -  \exp(\log q + r - c \eps m)$, for all $k \in [q]$,
\begin{align}
\|\g_k - \bhat_k \|   & \leq  1.2 \eps \|\left(\I_n-\U\U^\top \right)\Ustar \bstar_k\| \nonumber
\end{align}
\end{lemma}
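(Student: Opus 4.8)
\emph{Proof strategy for Lemma~\ref{bhat_lemma}.} The plan is to get a closed form for $\bhat_k-\g_k$ and then exploit rotational invariance of the Gaussian sensing matrix $\A_k$. Recall that in the minimization step $\bhat_k=(\A_k\U)^{\dagger}\y_k=(\A_k\U)^{\dagger}\A_k\xstar_k$ for a fixed orthonormal $\U$ (sample-splitting makes $\U$ independent of the $\A_k$'s used in this step). Write $\xstar_k=\Ustar\bstar_k=\U\g_k+\w_k$ with $\g_k:=\U^\top\xstar_k$ and $\w_k:=(\I-\U\U^\top)\Ustar\bstar_k$, so $\U^\top\w_k=0$. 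Since $\A_k\U$ is an $m\times r$ i.i.d.\ Gaussian and hence has full column rank whp once $m\gtrsim r$, we have $(\A_k\U)^{\dagger}(\A_k\U)=\I_r$, and therefore
\[
\bhat_k-\g_k=(\A_k\U)^{\dagger}\A_k\w_k .
\]

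Next I would pass to the QR factorization $\A_k\U\qreq\Q_k\R_k$, so that $(\A_k\U)^{\dagger}=\R_k^{-1}\Q_k^\top$ and
\[
\norm{\bhat_k-\g_k}\ \le\ \norm{\R_k^{-1}}\,\norm{\Q_k^\top\A_k\w_k}\ =\ \frac{\norm{\Q_k^\top\A_k\w_k}}{\sigmamin(\A_k\U)} .
\]
The key structural point is that $\w_k$ is orthogonal to the column span of $\U$, so appending $\w_k/\norm{\w_k}$ to the columns of $\U$ gives an orthonormal set; by Gaussianity, $\A_k\U$ and $\A_k\w_k$ are \emph{independent}, with $\A_k\w_k\sim\mathcal N(0,\norm{\w_k}^2\I_m)$. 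Since $\Q_k$ is a deterministic function of $\A_k\U$, conditioning on $\A_k\U$ gives $\Q_k^\top\A_k\w_k\sim\mathcal N(0,\norm{\w_k}^2\I_r)$, i.e.\ $\norm{\Q_k^\top\A_k\w_k}=\norm{\w_k}\,\norm{\h_k}$ for a standard Gaussian $\h_k\in\mathbb R^r$ independent of $\A_k\U$. Hence
\[
\norm{\bhat_k-\g_k}\ \le\ \norm{\w_k}\cdot\frac{\norm{\h_k}}{\sigmamin(\A_k\U)} .
\]

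It then remains to control the scalar ratio $\norm{\h_k}/\sigmamin(\A_k\U)$. Standard Gaussian bounds give $\norm{\h_k}\le\sqrt r+t$ and $\sigmamin(\A_k\U)\ge\sqrt m-\sqrt r-t$, each failing with probability at most $e^{-t^2/2}$; choosing $t=c_0\sqrt m$ for a small enough absolute constant $c_0$ and using $m\ge Cr$ with $C$ large enough makes this ratio at most $0.4$ (indeed it is $\lesssim\sqrt{r/m}$). A union bound over $k\in[q]$ then gives a failure probability of the form $\exp(\log q-cm)$, which is stronger than the stated $\exp(\log q+r-cm)$, so the lemma follows.

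The step I expect to be the crux is the one in the second paragraph: recognizing that $\w_k\perp\mathrm{col}(\U)$, so $\A_k\w_k$ is independent of $\A_k\U$, which collapses the least-squares ``bias'' term $(\A_k\U)^{\dagger}\A_k\w_k$ into $\norm{\w_k}$ times a ratio involving only a standard Gaussian norm in $\mathbb R^r$ and the smallest singular value of an independent $m\times r$ Gaussian. Once that observation is in place, the rest is routine concentration; the only care needed is choosing the constant in $m\ge Cr$ so that the explicit $0.4$ comes out, which works because the ratio is $\lesssim\sqrt{r/m}$ with an absolute constant, matching the fact that the hypothesis only needs $m\gtrsim r$.
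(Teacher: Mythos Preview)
Your proof is correct and is the standard argument for this lemma. Note that the paper does not prove Lemma~\ref{bhat_lemma} itself; it is imported from \cite{lrpr_gdmin}, so there is no in-paper proof to compare against. Your decomposition $\xstar_k=\U\g_k+\w_k$, the independence of $\A_k\U$ and $\A_k\w_k$ via rotational invariance, and the reduction to $\|\h_k\|/\sigmamin(\A_k\U)$ is exactly the natural route and matches the approach in the cited reference.

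One small quibble: your parenthetical claim that the ratio is $\lesssim\sqrt{r/m}$ is not quite right with the choice $t=c_0\sqrt m$. With that $t$, the numerator is $\sqrt r+c_0\sqrt m$ and the denominator is $(1-c_0)\sqrt m-\sqrt r$, so the ratio is $\approx c_0/(1-c_0)$, a small constant rather than something vanishing with $m$. This does not affect the lemma: choosing $c_0$ small (say $c_0=0.1$) and $m\ge Cr$ with $C$ a large enough absolute constant makes the ratio at most $0.4$, and the per-$k$ failure probability is $\le 2e^{-c_0^2 m/2}$, which after a union bound over $k\in[q]$ gives the stated $\exp(\log q+r-cm)$ (indeed better, without the $+r$). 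If you want the ratio to actually scale like $\sqrt{r/m}$ you would need $t\asymp\sqrt r$, but then the probability bound degrades to $\exp(\log q-cr)$, which is not what the lemma claims; so your choice $t=c_0\sqrt m$ is the right one for this statement.
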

\begin{proof}
We provide a proof of this lemma in Appendix \ref{proof_bhat_lemma} to emphasise a slightly more general version of this lemma. In particular, our proof shows that we can replace $0.4$ in the previous version of this lemma by any $\eps > 0$, and the bound holds w.p. at least $1 -  \exp(\log q + r - c \eps m)$.
\end{proof}

By Lemma \ref{bhat_lemma} with $\eps=0.3$, if $m \gtrsim \max(\log q, \log n, r)$, then, whp, for all $k \in [q]$,
\[
\|\b_k - \g_k\| \le 0.4 \|(\I - \U \U^\top) \Ustar \bstar_k\|
\]
Using $\SE_2(\U,\Ustar) \le \delta_t$, this directly implies that
\ben
\item $\|\b_k - \g_k\| \le 0.4 \delta_t \|\bstar_k\|$

\item $\|\b_k\| \le \|\g_k\|+0.4 \cdot 0.02 \|\bstar_k\| \le 1.1 \|\bstar_k\|$

\item $\|\x_k - \xstar_k\| \le 1.4  \delta_t \|\bstar_k\|$
\een
(to obtain the last bound, we need to add and subtract $\U \g_k$ and then use triangle inequality).
Using these bounds,
\[
\|\B - \G\|_F \le 0.4 \delta_t \sqrt{\sum_k \|\bstar_k\|^2} = 0.4 \delta_t \sqrt{r} \sigmax
\]
and we can get a similar bound on $\|\X - \Xstar\|_F$. Thus,
\ben
\item $\|\B - \G\|_F \le 0.4 \delta_t \|\Bstar\|_F \le  0.4 \sqrt{r} \delta_t \sigmax $

\item $\|\X - \Xstar\|_F \le 1.4 \sqrt{r} \delta_t \sigmax $

\een
Furthermore,
\[
\sigmamin(\B) \ge \sigmamin(\G) - \|\B - \G\| \ge \sigmamin(\G) - \|\B - \G\|_F
\]
To lower bound $\sigmamin(\G)$, observe that
\[
\sigmamin(\G) = \sigmamin(\G^\top) \ge \sigmin \sigmamin(\Ustar^\top \U)
\]
and
\[
\sigmamin(\Ustar^\top \U)  = \sqrt{1 - \|\P \U\|^2} \ge \sqrt{1- \delta_t^2}.
\]
This follows using $\sigmamin^2(\Ustar^\top \U) = \lambda_{\min}( \U^\top \Ustar \Ustar^\top \U )=  \lambda_{\min}( \U^\top(\I -\P) \U) =   \lambda_{\min}(\I - \U^\top \P \U) =  \lambda_{\min}(\I - \U^\top \P^2 \U) = 1 -  \lambda_{\max}(\U^\top \P^2 \U) = 1 - \|\P \U\|^2$. 

Combining the above three bounds and using the bound on  $\|\B - \G\|_F$, if $\delta_t < 0.02 / \sqrt{r} \kappa$, then
\[
\sigmamin(\B) \ge \sqrt{1- \delta_t^2}\sigmin - 0.4 \sqrt{r} \delta_t \sigmax \ge 0.9 \sigmin
\]
and
\[
\sigma_{\max}(\B) \le  \|\G\| + 0.4 \sqrt{r} \delta_t \sigmax \le 1.1 \sigmax
\]
since $\|\G\| \le \|\Bstar\| = \sigmax$.
Thus, we have proved the following claim:
\begin{theorem}
Assume that $\SE_2(\U,\Ustar) \le \delta_t$.
If $\delta_t \le 0.02 / \sqrt{r} \kappa$,
and
if $m \gtrsim \max(\log q, \log n, r)$, then whp,
\ben
\item $\|\b_k - \g_k\| \le 0.4 \delta_t \|\bstar_k\|$

\item $\|\b_k\| \le \|\g_k\|+0.4 \cdot 0.02 \|\bstar_k\| \le 1.1 \|\bstar_k\|$

\item $\|\B - \G\|_F \le 0.4 \delta_t \|\Bstar\|_F \le  0.4 \sqrt{r} \delta_t \sigmax $

\item $\|\x_k - \xstar_k\| \le 1.4  \delta_t \|\bstar_k\|$

\item $\|\X - \Xstar\|_F \le 1.4 \sqrt{r} \delta_t \sigmax $

\item $\sigmamin(\B) \ge  0.9 \sigmin$

\item $\sigma_{\max}(\B) \le  1.1 \sigmax$
\een
(only the last two bounds require the upper bound on $\delta_t$).
\label{Blemma}
\end{theorem}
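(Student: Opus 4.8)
The plan is to obtain all seven conclusions as elementary \emph{deterministic} consequences of Lemma \ref{bhat_lemma}, which already supplies the only nontrivial probabilistic content (in particular, the event on which $\A_k\U$ has full column rank, so that $\bhat_k=(\A_k\U)^\dag\y_k$ is well-defined, and the event $\|\g_k-\bhat_k\|\le 0.4\|(\I-\U\U^\top)\Ustar\bstar_k\|$). Everything after that is bookkeeping that upgrades a per-column $\ell_2$ estimate into Frobenius-norm bounds and into spectral bounds on $\B$. First I would invoke Lemma \ref{bhat_lemma}: since $m\gtrsim\max(\log q,\log n,r)$, its failure probability $\exp(\log q+r-cm)$ is at most $n^{-10}$, so whp the bound $\|\g_k-\bhat_k\|\le 0.4\|(\I-\U\U^\top)\Ustar\bstar_k\|$ holds simultaneously for all $k\in[q]$.

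\emph{Per-column bounds (items 1, 2, 4).} From $\|(\I-\U\U^\top)\Ustar\bstar_k\|\le\|(\I-\U\U^\top)\Ustar\|\,\|\bstar_k\|=\SE_2(\U,\Ustar)\|\bstar_k\|\le\delta_t\|\bstar_k\|$ I get item 1 at once. For item 2, $\|\b_k\|\le\|\g_k\|+\|\b_k-\g_k\|$ with $\|\g_k\|=\|\U^\top\Ustar\bstar_k\|\le\|\bstar_k\|$, so $\delta_t<0.02$ gives $\|\b_k\|\le 1.1\|\bstar_k\|$. For item 4, decompose $\xstar_k=\U\U^\top\Ustar\bstar_k+(\I-\U\U^\top)\Ustar\bstar_k=\U\g_k+(\I-\U\U^\top)\Ustar\bstar_k$, hence $\x_k-\xstar_k=\U(\b_k-\g_k)-(\I-\U\U^\top)\Ustar\bstar_k$, and taking norms (using that $\U$ has orthonormal columns) together with item 1 yields $\|\x_k-\xstar_k\|\le(0.4+1)\delta_t\|\bstar_k\|$.

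\emph{Frobenius-norm bounds (items 3, 5).} Square and sum: $\|\B-\G\|_F^2=\sum_k\|\b_k-\g_k\|^2\le(0.4\delta_t)^2\sum_k\|\bstar_k\|^2=(0.4\delta_t)^2\|\Bstar\|_F^2$, and $\|\Bstar\|_F^2=\trace(\bSigma^2)=\sum_i(\sigma_i^\star)^2\le r\,\sigmax^2$; the bound on $\|\X-\Xstar\|_F$ is identical starting from item 4. Items 1--5 use only $\SE_2(\U,\Ustar)\le\delta_t$ with $\delta_t$ a small absolute constant, which is why the stronger hypothesis $\delta_t\le 0.02/(\sqrt r\,\kappa)$ is invoked only for the last two bounds.

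\emph{Spectral bounds on $\B$ (items 6, 7).} This is the one step that needs care. Write $\sigmamin(\B)\ge\sigmamin(\G)-\|\B-\G\|\ge\sigmamin(\G)-\|\B-\G\|_F$. Since $\G^\top=\Bstar^\top(\Ustar^\top\U)^\top$ with $\Bstar^\top$ tall and $\Ustar^\top\U$ square, the submultiplicativity of $\sigmamin$ recorded in Section \ref{proofs} gives $\sigmamin(\G)\ge\sigmamin(\Bstar)\,\sigmamin(\Ustar^\top\U)=\sigmin\,\sigmamin(\Ustar^\top\U)$. Next, $\sigmamin^2(\Ustar^\top\U)=\lambda_{\min}(\U^\top\Ustar\Ustar^\top\U)=\lambda_{\min}(\I-\U^\top\P\U)=1-\|\P\U\|^2\ge 1-\delta_t^2$, using $\P=\P^2$ and the symmetry of the subspace distance $\|\P\U\|=\SE_2(\U,\Ustar)\le\delta_t$. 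Combining with item 3, $\sigmamin(\B)\ge\sqrt{1-\delta_t^2}\,\sigmin-0.4\sqrt r\,\delta_t\,\sigmax$; substituting $\delta_t\le 0.02/(\sqrt r\,\kappa)$ makes the second term at most $0.008\,\sigmin$ and the first at least $0.99\,\sigmin$, so $\sigmamin(\B)\ge 0.9\,\sigmin$. For item 7, $\sigma_{\max}(\B)\le\|\G\|+\|\B-\G\|_F\le\sigmax+0.4\sqrt r\,\delta_t\,\sigmax\le 1.1\,\sigmax$ since $\|\G\|=\|(\U^\top\Ustar)\Bstar\|\le\|\Bstar\|=\sigmax$. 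There is no genuine difficulty beyond tracking constants; the point to double-check is that the $\sqrt r$ lost in passing from item 1 to item 3 is exactly what the $1/\sqrt r$ in the hypothesis on $\delta_t$ cancels, which is precisely why items 6--7 require the sharper bound on $\delta_t$ while items 1--5 do not.
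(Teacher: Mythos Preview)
Your proof is correct and follows essentially the same route as the paper's: invoke Lemma \ref{bhat_lemma} for the probabilistic per-column bound, deduce items 1--5 by straightforward norm inequalities, and obtain items 6--7 from the chain $\sigmamin(\B)\ge\sigmamin(\G)-\|\B-\G\|_F$ together with $\sigmamin(\G)\ge\sigmin\sqrt{1-\delta_t^2}$. You are in fact slightly more explicit than the paper on two points it leaves implicit: the orthogonal decomposition $\x_k-\xstar_k=\U(\b_k-\g_k)-(\I-\U\U^\top)\Ustar\bstar_k$ underlying item 4, and the symmetry $\|\P\U\|=\SE_2(\Ustar,\U)=\SE_2(\U,\Ustar)$ needed to invoke the hypothesis when bounding $\sigmamin(\Ustar^\top\U)$.
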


\subsection{New bounds on the expected gradient and deviation from it}\label{proof_grad_bnd_new}

Using independence of $\A_k$ and $\{\U,\b_k\}$ (due to sample splitting),
\[
\E[\gradU] = \sum_k m (\x_k-{\xstar_k})\b_k{}^\top
\]

Using bounds on $\|\B\|$ and  $\|\Xstar - \X\|_F$ from  Theorem \ref{Blemma}, if $\delta_t < \frac{c}{\sqrt{r} \kappa}$,
\begin{align}
\| \E[\gradU]\|   & =  \| \sum_k m (\x_k-{\xstar_k})\b_k{}^\top \| = m\|(\X - \Xstar) \B^\top \|   \nn\\
&\le m\|\X - \Xstar\| \cdot \| \B\|  \nn\\
&  \le m\| \X -\Xstar\|_F \cdot \| \B\|  \le 1.1 m\delta_t \sqrt{r}  \sigmax^2. \nn
\end{align}
w.p. $1- \exp(\log q+r-cm)$.

Next, we bound
$\| \gradU-\E[\gradU]\|
 = \max_{\|\w\|=1, \|\z\|=1}  \w^\top ( \sum_k \sum_i \a_\ik \a_\ik^{\top}(\x_k-\xstar_k) \b_k^\top  - \E[\cdot]) \z.
 $
 This also uses independence of $\A_k$ and $\{\U,\b_k\}$.

We bound the above for fixed unit norm $\w,\z$ using sub-exponential Bernstein inequality (Theorem 2.8.1 of \cite{versh_book}). We extend the bound to all unit norm $\w,\z$ by using a standard epsilon-net argument.
For fixed unit norm $\w,\z$, consider
\[
\sum_k \sum_i \left( (\w^\top \a_\ik)  (\b_k^\top \z) \a_\ik^{\top}(\x_k-\xstar_k)- \E[\cdot] \right)
\]
Observe that the summands are independent, zero mean, sub-exponential r.v.s with sub-exponential norm $K_\ik \le C \| \w \| \cdot |\z^\top \b_k| \cdot \| \x_k-\xstar_k\| =C |\z^\top \b_k| \cdot \| \x_k-\xstar_k\| $.
We apply the sub-exponential Bernstein inequality, Theorem 2.8.1 of \cite{versh_book}, with $t = \eps_1 \delta_t m  \sigmin^2$. We have
\begin{align*}
\frac{t^2}{\sum_\ik K_\ik^2}  
& \ge \frac{\eps_1^2\delta_t^2 m^2  \sigmin^4}{ m  \sum_k \|\x_k-\xstar_k\|^2 (\z^\top \b_k)^2   }\\
& \ge \frac{\eps_1^2\delta_t^2 m^2  \sigmin^4}{ m \max_k \|\x_k-\xstar_k\|^2 \sum_{k} (\z^\top \b_k)^2 }\\
& = \frac{\eps_1^2\delta_t^2 m^2  \sigmin^4}{ m \max_k \|\x_k-\xstar_k\|^2 \|\z^\top \B\|^2 }\\
& \ge \frac{\eps_1^2\delta_t^2 m^2  \sigmin^4 q }{ 1.4^2 m \mu^2 \delta_t^2 r \sigmax^2 \|\B\|^2 }\\
& \ge \frac{\eps_1^2\delta_t^2 m^2  \sigmin^4 q }{ 1.4^2 m \mu^2 \delta_t^2 r \sigmax^2 1.1 \sigmax^2 } =  c \frac{\eps_1^2 m  q }{ \kappa^4 \mu^2  r }\\
\frac{t}{\max_\ik K_\ik} & \ge  \frac{\eps_1\delta_t m  \sigmin^2 }{\max_k \| \b_k\| \max_k \|\x_k - \xstar_k\| }
\ge c \frac{\eps_1 m  q }{\kappa^2 \mu^2  r}
\end{align*}
In the above, we used
(i) $\sum_k (\z^\top \b_k)^2 = \|\z^\top \B\|^2 \le \|\B\|^2$  since $\z$ is unit norm,
(ii) Theorem \ref{Blemma} to bound $ \|\B\| \le 1.1 \sigmax$, and
(iii) Theorem \ref{Blemma} followed by Assumption \ref{right_incoh} (right incoherence) to bound $\|\x_k-\xstar_k\| \le \delta_t \cdot \mu \sigmax \sqrt{r/q}$ and $|\z^\top \b_k| \le \|\b_k\| \le 1.1 \|\bstar_k\| \le 1.1 \mu \sigmax \sqrt{r/q}$.

For $\eps_1 < 1$, the first term above is smaller (since $1/\kappa^4 \le 1/\kappa^2$), i.e., $\min(\frac{t^2}{\sum_\ik K_\ik^2},\frac{t}{\max_\ik K_\ik} ) =  c\frac{\eps_1^2 m q }{\kappa^4 \mu^2 r}.$
Thus, by sub-exponential Bernstein,  w.p. at least  $1-\exp (- c\frac{\eps_1^2 m q }{\kappa^4 \mu^2 r} )- \exp(\log q+r-cm)$, for a given $\w,\z$,
\[
	 \w^\top(\gradU - \E[\gradU] ) \z \le  \eps_1 \delta_t m \sigmin^2
\]
Using a standard epsilon-net argument to bound the maximum of the above over all unit norm $\w,\z$, e.g., using \cite[Proposition 4.7]{lrpr_gdmin}, we can conclude that
\[
	 \|\gradU - \E[\gradU] \| \le  1.1 \eps_1 \delta_t m \sigmin^2
\]
w.p. at least  $1-  \exp ( C(n+r) -c\frac{\eps_1^2 m q }{\kappa^4 \mu^2 r} )- \exp(\log q+r-cm)$. The factor of $ \exp(C(n+r))$ is due to the epsilon-net over $\w$ and that over $\z$: $\w$ is an $n$-length unit norm vector while $\z$ is an $r$-length unit norm vector. The smallest epsilon net covering the hyper-sphere of all $\w$s  is of size $(1+ 2/\eps_{net})^n = C^n$ with $\eps_{net}=c$ while that for $\z$ is of size $C^r$. Union bounding over both thus gives a factor of $C^{n+r}$.
By replacing $\eps_1$ by $\eps_1/1.1$, our bound becomes simpler (and $1/1.1^2$ gets incorporated into the factor $c$). We have thus proved the following.

\begin{lemma}
\label{grad_bnd_new}
Assume that $\SE_2(\U,\Ustar) \le \delta_t$. The following hold:
\ben
\item $\E[\gradU] = m (\X - \Xstar) \B^\top = m (\U \B \B^\top - \Xstar \B^\top)$
\item $\|\E[\gradU]\| \le 1.1 m\delta_t \sqrt{r}  \sigmax^2$
\item If $\delta_t < \frac{c}{\sqrt{r} \kappa}$, then, w.p. at least
\\
$1-  \exp ( C(n+r) -c\frac{\eps_1^2 m q }{\kappa^4 \mu^2 r} )- \exp(\log q+r-cm)$, 
 \[
\|\gradU - \E[\gradU]\| \le  \eps_1 \delta_t m \sigmin^2
\]
\een
\end{lemma}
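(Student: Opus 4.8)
The plan is to verify the three claims in order, all of which exploit the independence of $\A_k$ from $\{\U, \bhat_k\}$ guaranteed by sample-splitting. For item 1, I would start from the definition $\gradU = \sum_\ik (\a_\ik^\top \U \bhat_k - \y_\ik) \a_\ik \bhat_k^\top$, substitute $\y_\ik = \a_\ik^\top \xstar_k$, and take the expectation over the Gaussian rows $\a_\ik$ while treating $\U, \bhat_k, \xstar_k$ as fixed. Since $\E[\a_\ik \a_\ik^\top] = \I_n$ and the $m$ rows for each $k$ contribute identically, this collapses to $\sum_k m(\U\bhat_k - \xstar_k)\bhat_k^\top = m(\X - \Xstar)\B^\top = m(\U\B\B^\top - \Xstar \B^\top)$, using $\x_k = \U\bhat_k$. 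For item 2, I would bound $\|\E[\gradU]\| = m\|(\X - \Xstar)\B^\top\| \le m\|\X - \Xstar\|\,\|\B\| \le m\|\X - \Xstar\|_F\,\|\B\|$ and then plug in the bounds $\|\X - \Xstar\|_F \le 1.4\sqrt{r}\,\delta_t\sigmax$ and $\|\B\| \le 1.1\sigmax$ from Theorem \ref{Blemma}; the product $1.4 \cdot 1.1 \le 1.54$ is absorbed, giving the stated $1.1 m\delta_t\sqrt{r}\sigmax^2$ (one would actually want to be slightly more careful with the constant, but this is routine).

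The substantive part is item 3, the deviation bound. I would write $\|\gradU - \E[\gradU]\| = \max_{\|\w\|=1,\|\z\|=1} \w^\top(\gradU - \E[\gradU])\z$, and for fixed unit $\w, \z$ express $\w^\top\gradU\z = \sum_\ik (\w^\top\a_\ik)(\bhat_k^\top\z)(\a_\ik^\top(\x_k - \xstar_k))$. Each summand is a product of (at most) two correlated Gaussian linear forms in $\a_\ik$, hence sub-exponential with sub-exponential norm $K_\ik \lesssim |\z^\top\bhat_k|\cdot\|\x_k - \xstar_k\|$ (using $\|\w\|=1$). I would then apply the sub-exponential Bernstein inequality (Theorem 2.8.1 of \cite{versh_book}) with the target deviation $t = \eps_1\delta_t m \sigmin^2$. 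The two quantities controlling the tail are $t^2/\sum_\ik K_\ik^2$ and $t/\max_\ik K_\ik$; bounding $\sum_k \|\x_k - \xstar_k\|^2(\z^\top\bhat_k)^2 \le \max_k\|\x_k - \xstar_k\|^2 \cdot \|\z^\top\B\|^2 \le \max_k\|\x_k - \xstar_k\|^2\|\B\|^2$, and invoking right incoherence (Assumption \ref{right_incoh}) together with Theorem \ref{Blemma} to get $\|\x_k - \xstar_k\| \le \delta_t\mu\sigmax\sqrt{r/q}$ and $\|\bhat_k\| \le 1.1\mu\sigmax\sqrt{r/q}$, both quantities reduce to $c\,\eps_1^2 mq/(\kappa^4\mu^2 r)$ and $c\,\eps_1 mq/(\kappa^2\mu^2 r)$ respectively. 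For $\eps_1 < 1$ the first dominates, yielding the per-$(\w,\z)$ bound w.p. $1 - \exp(-c\eps_1^2 mq/(\kappa^4\mu^2 r)) - \exp(\log q + r - cm)$, where the second failure term accounts for the event (from Theorem \ref{Blemma} / Lemma \ref{bhat_lemma}, needing $\delta_t < c/(\sqrt{r}\kappa)$) on which the norm bounds on $\B$ and $\X - \Xstar$ hold.

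Finally I would upgrade from fixed $\w, \z$ to the supremum via a standard $\eps$-net argument: an $\eps_{net}$-net on the unit sphere in $\R^n$ has size $(1 + 2/\eps_{net})^n$ and similarly $C^r$ for $\R^r$, so union-bounding costs a factor $\exp(C(n+r))$ in the failure probability and a mild constant inflation (absorbed by replacing $\eps_1$ with $\eps_1/1.1$) in the bound — this is exactly \cite[Proposition 4.7]{lrpr_gdmin}. The main obstacle, and the only place requiring genuine care, is the chain of inequalities bounding $t^2/\sum K_\ik^2$: one must route the sum through $\max_k\|\x_k - \xstar_k\|^2$ (not a trivial bound, since naively $\sum_k(\z^\top\bhat_k)^2$ could be as large as $\|\B\|_F^2$) so that the $q$ in the numerator survives and the final exponent scales like $mq/r$ rather than $m/r$ — this is precisely what lets the sample complexity come out to $m_1 q \gtrsim nr$ rather than $nr^2$, and it is the crux of the improvement claimed in the introduction.
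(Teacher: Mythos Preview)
Your proposal is correct and follows essentially the same route as the paper: compute $\E[\gradU]$ via $\E[\a_\ik\a_\ik^\top]=\I$, bound its norm through $\|\X-\Xstar\|_F\|\B\|$ and Theorem~\ref{Blemma}, and handle the deviation by sub-exponential Bernstein on $\sum_\ik (\w^\top\a_\ik)(\b_k^\top\z)\a_\ik^\top(\x_k-\xstar_k)$ followed by an $\eps$-net over $(\w,\z)$. You also correctly isolate the key maneuver---bounding $\sum_k(\z^\top\b_k)^2$ by $\|\B\|^2$ rather than $\|\B\|_F^2$ after pulling out $\max_k\|\x_k-\xstar_k\|^2$---which is exactly what the paper does to obtain the $mq/r$ exponent.
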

The above lemma is an improvement over the bounds given in \cite{lrpr_gdmin} because $\delta_t$ is now the bound on the 2-norm SD, and still it only needs $mq \gtrsim nr/\eps_1^2$.

\subsection{GD step}\label{proofs_gd}
Assume that $\SE_2(\U,\Ustar) \le \delta_t$ with $\delta_t < 0.02$.


Recall the Projected GD step for $\U$:
\begin{eqnarray}
\Utilde^+ &= \U - \eta \gradU \text{ and } \Utilde^+  \qreq \U^+\R^+  \nn
\end{eqnarray}

Since $ \U^+= \Utilde^+ (\R^+)^{-1}$ and since $\|(\R^+)^{-1}\| =1/ \sigmamin(\R^+) = 1/ \sigmamin(\Utilde^+)$, thus, $\SE_2(\U^+, \Ustar)  = \| \P \U^+\| $ can be bounded as
\begin{eqnarray}
\SE_2(\U^+, \Ustar) & \le \dfrac{\|\P \Utilde^+\|}{\sigmamin(\Utilde^+)} \le \dfrac{\|\P \Utilde^+\|}{\sigmamin(\U) - \eta \|\gradU\|}
\label{SDeq}
\end{eqnarray}

Consider the numerator.
Adding/subtracting $\E[\gradU]$, left multiplying both sides by $\P$, and using Lemma \ref{grad_bnd_new} (first part),
\begin{align*}
\tilde\U^+ & = \U - \eta  \E[\gradU] + \eta  (\E[\gradU] - \gradU ), \text{ thus,} \\
 \P \tilde\U^+  
 & = \P \U - \eta m \P \U \B \B^\top + \eta \P (\E[\gradU] - \gradU)
\end{align*}
The last row used $\P \Xstar=0$. 
Thus,
\begin{align}
\|\P \Utilde^+ \| \le \|\P \U\| \|\I -  \eta m \B \B^\top\| + \eta \|\E[\gradU] - \gradU\| \label{eq:PU+}
\end{align}
Using Theorem \ref{Blemma}, we get
\[
\lambda_{\min}(\I -  \eta m \B \B^\top) = 1 - \eta m \|\B\|^2 \ge 1 - 1.2 \eta m \sigmax^2
\]
Thus, if $\eta < 0.5/ m \sigmax^2 $, then the above matrix is p.s.d. This along with Theorem~\ref{Blemma} then implies that
\[
\|\I -  \eta m \B \B^\top\| = \lambda_{\max}(\I -  \eta m \B \B^\top) \le 1 - 0.9 \eta m  \sigmin^2
\]
Using the above, \eqref{eq:PU+}, and the bound on $\|\E[\gradU] - \gradU\|$ from Lemma \ref{grad_bnd_new}, we conclude the following:
If $\eta \le 0.5/m\sigmax^2$, and $\delta_t \le c/\sqrt{r} \kappa$, then
\begin{align}
\|\P \tilde\U^+ \|
& \le \|\P \U\| \|\I -  \eta m \B \B^\top\| + \eta \|\E[\gradU] - \gradU\| \nn \\
& \le \delta_t (1 - 0.9~\eta m \sigmin^2) + \eta  m \eps_1 \delta_t \sigmin^2
\label{PUtilde}
\end{align}
w.p. at least $1-  \exp ( C(n+r) -c\frac{\eps_1^2 m q }{\kappa^4 \mu^2 r} )- \exp(\log q+r-cm)$. This probability is at least $1 - n^{-10}$ if $mq \gtrsim \kappa^4 \mu^2 nr/\eps_1^2$ and $m \gtrsim \max(\log n, \log q, r)$.

Next we use \eqref{PUtilde} with $\eps_1 = 0.1$  and Lemma \ref{grad_bnd_new} to bound the right hand side of \eqref{SDeq}. Set $\eta = c_\eta / m \sigmax^2$. 
If  $c_\eta \le 0.5$, if $\delta_t \le c/\sqrt{r} \kappa^2$, and lower bounds on $m$ from above hold, \eqref{SDeq} implies that, whp,
\begin{align}
& \SE_2(\U^+, \Ustar) \nn \\
& \le \dfrac{\|\P \Utilde^+\|}{\sigmamin(\Utilde^+)} \nn \\
& \le \dfrac{\|\P \Utilde^+\|}{\sigmamin(\U) - \eta \|\gradU\|} \nn \\
& \le \dfrac{\|\P \U\| \|\I -  \eta m \B \B^\top\| + \eta \|\E[\gradU] - \gradU\|}{1- \eta \|\E[\gradU]\| - \eta \|\gradU - \E[\gradU]} \nn \\
& \le \frac{\delta_t  (1 - \eta m \sigmin^2 (0.9  - 0.1) ) }{1 - \eta \|\E[\gradU]\| - \eta \|\gradU - \E[\gradU] \|} \nn \\
& \le \frac{\delta_t (1 - 0.8 \eta m  \sigmin^2)}{1 - \eta m \delta_t \sqrt{r} \sigmax^2 (1.4    + \frac{0.1}{\kappa^2 \sqrt{r}} )  } \nn  \\
& \le \frac{\delta_t (1 - 0.8  \eta m \sigmin^2)}{1 -  1.5 \eta m \delta_t \sqrt{r} \sigmax^2 } \nn \\
& \le \delta_t (1 - 0.8  \eta m \sigmin^2) (1 +  1.5 \eta m \delta_t \sqrt{r} \sigmax^2 ) \nn  \\
& \le \delta_t (1 - 0.8  \eta m  \sigmin^2 +  1.5 \eta m \delta_t \sqrt{r} \sigmax^2 )\nn   \\
& = \delta_t (1 - \eta m  \sigmin^2 (0.8 -  1.5 \delta_t \sqrt{r} \kappa^2 ) ) \nn  \\
& \le \delta_t (1 - \eta m  \sigmin^2 (0.8 -  0.15 ) ) \nn \\
& \le \delta_t (1 - 0.6 \eta m  \sigmax^2 / \kappa^2 ) = \delta_t (1 - 0.6 c_\eta / \kappa^2 ) \nn
\end{align}
%
In the above we used $\kappa^2\sqrt{r} >1$, $(1-x)^{-1} < (1+x)$ if $|x| < 1$, and  $\delta_t < 0.1 / \sqrt{r} \kappa^2$ (used in second-last inequality).

Thus, we have proved the following result.
\begin{theorem}\label{gd_claim}
Assume that Assumption \ref{right_incoh} holds and $\SE_2(\U,\Ustar) \le \delta_t$.
If $\delta_t \le 0.02 / \sqrt{r} \kappa^2$,
if $\eta = c_\eta / m \sigmax^2 $ with $c_\eta \le 0.5$,
and
if $mq \gtrsim \kappa^4 \mu^2 n r$ and $m \gtrsim \max(\log n, \log q, r)$, then, whp
\[
\SE_2(\U^+, \Ustar) \le \delta_{t+1}:= \delta_t (1 - 0.6 c_\eta / \kappa^2 )
\]
\end{theorem}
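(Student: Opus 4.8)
The plan is to track how the subspace distance $\SE_2(\U,\Ustar)=\|\P\U\|$ changes under one projected-GD step $\Utilde^+=\U-\eta\,\gradU$, $\Utilde^+\qreq\U^+\R^+$. Since $\U^+=\Utilde^+(\R^+)^{-1}$ and $\|(\R^+)^{-1}\|=1/\sigmamin(\R^+)=1/\sigmamin(\Utilde^+)$, I would first reduce the claim to a numerator/denominator estimate: $\SE_2(\U^+,\Ustar)\le\|\P\Utilde^+\|/\sigmamin(\Utilde^+)$, and then lower bound $\sigmamin(\Utilde^+)\ge\sigmamin(\U)-\eta\|\gradU\|=1-\eta\|\gradU\|$ (using that $\U$ has orthonormal columns). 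So it suffices to (i) upper bound the numerator $\|\P\Utilde^+\|$ and (ii) upper bound $\|\gradU\|$ to keep the denominator close to $1$.

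For (i) I would add and subtract $\E[\gradU]$ and left-multiply by $\P$. By Lemma \ref{grad_bnd_new}, $\E[\gradU]=m(\U\B\B^\top-\Xstar\B^\top)$, and since $\P\Xstar=0$ the deterministic part collapses to $\P\Utilde^+=\P\U\,(\I-\eta m\B\B^\top)+\eta\,\P(\E[\gradU]-\gradU)$, whence $\|\P\Utilde^+\|\le\|\P\U\|\,\|\I-\eta m\B\B^\top\|+\eta\|\gradU-\E[\gradU]\|$. I would then invoke Theorem \ref{Blemma}: for $\delta_t$ below $c/\sqrt{r}\kappa$ we have $\sigmamin(\B)\ge 0.9\sigmin$ and $\|\B\|\le 1.1\sigmax$, so for $\eta\le 0.5/m\sigmax^2$ the matrix $\I-\eta m\B\B^\top$ is PSD with operator norm at most $1-0.9\,\eta m\sigmin^2$. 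Plugging in the deviation bound $\|\gradU-\E[\gradU]\|\le\eps_1\delta_t m\sigmin^2$ from Lemma \ref{grad_bnd_new} with $\eps_1=0.1$ gives numerator $\le\delta_t(1-0.8\,\eta m\sigmin^2)$.

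For (ii) I would again use Lemma \ref{grad_bnd_new}: $\|\E[\gradU]\|\le 1.1\,m\delta_t\sqrt{r}\sigmax^2$ and $\|\gradU-\E[\gradU]\|\le 0.1\,\delta_t m\sigmin^2$, so $\sigmamin(\Utilde^+)\ge 1-\eta m\delta_t\sqrt{r}\sigmax^2(1.4+0.1/\kappa^2\sqrt{r})\ge 1-1.5\,\eta m\delta_t\sqrt{r}\sigmax^2$. Dividing numerator by denominator and using $(1-x)^{-1}\le 1+x$ for $|x|<1$ yields $\SE_2(\U^+,\Ustar)\le\delta_t(1-0.8\,\eta m\sigmin^2)(1+1.5\,\eta m\delta_t\sqrt{r}\sigmax^2)\le\delta_t\bigl(1-\eta m\sigmin^2(0.8-1.5\,\delta_t\sqrt{r}\kappa^2)\bigr)$. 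Here the strong hypothesis $\delta_t\le 0.02/\sqrt{r}\kappa^2$ earns its keep: it forces $1.5\,\delta_t\sqrt{r}\kappa^2\le 0.15$, so the bracket is at least $0.65>0.6$, and substituting $\eta=c_\eta/m\sigmax^2$ gives the advertised contraction $\delta_{t+1}=\delta_t(1-0.6\,c_\eta/\kappa^2)$. The sample-complexity conditions $mq\gtrsim\kappa^4\mu^2 nr$ and $m\gtrsim\max(\log n,\log q,r)$, and the $1-n^{-10}$ probability, fall out of the failure probability in Lemma \ref{grad_bnd_new} once $\eps_1$ is fixed to a constant.

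The main obstacle I anticipate is exactly step (ii): the QR renormalization factor $1/\sigmamin(\Utilde^+)\ge 1$ works \emph{against} the contraction obtained in the numerator, so one must certify that the "leakage" term $\eta m\delta_t\sqrt{r}\sigmax^2$ coming from $\|\E[\gradU]\|$ is genuinely of smaller order than the contraction gain $\eta m\sigmin^2$ — i.e. that $\delta_t\sqrt{r}\kappa^2\ll 1$, which is precisely what the hypothesis on $\delta_t$ guarantees. Everything else is careful bookkeeping of absolute constants.
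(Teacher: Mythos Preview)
Your proposal is correct and follows essentially the same approach as the paper's own proof: the same numerator/denominator decomposition via $\SE_2(\U^+,\Ustar)\le\|\P\Utilde^+\|/\sigmamin(\Utilde^+)$, the same cancellation $\P\Xstar=0$ yielding $\P\Utilde^+=\P\U(\I-\eta m\B\B^\top)+\eta\P(\E[\gradU]-\gradU)$, the same use of Theorem~\ref{Blemma} and Lemma~\ref{grad_bnd_new} with $\eps_1=0.1$, and the same final arithmetic with $\delta_t\sqrt{r}\kappa^2\le 0.02$ to close the contraction. Even the intermediate constants match.
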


\section{New proof: Initialization}\label{proofs_init}
We need a new result for the initialization step because we need a tight bound on $\SE_2(\U_0, \Ustar)$.

\subsection{Results taken from \cite{lrpr_gdmin}}
Recall from Algorithm \ref{gdmin} that $\alpha$ uses a different set of measurements that is independent of those used for $\X_0$. We use the following four results from \cite{lrpr_gdmin}.
\begin{lemma}[\cite{lrpr_gdmin}] \label{EX0}
Conditioned on $\alpha$, we have the following conclusions.
Let $\zeta$  be a scalar standard Gaussian r.v.. Define
\[
\beta_{k}(\alpha) := \E[\zeta^2 \indic_{\{\|\xstar_{k}\|^2\zeta^2 \leq  \alpha\}}].
\]
Then,
\begin{align*}
&\E[\Xhat_0|\alpha] = \Xstar \D(\alpha), \\
&\text{ where }  \D(\alpha):=diagonal(\beta_k(\alpha),k \in [q])
\end{align*}
i.e. $\D(\alpha)$ is a diagonal matrix of size $q\times q$ with diagonal entries $\beta_{k}$ defined above.
\end{lemma}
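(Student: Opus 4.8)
The plan is to compute $\E[\Xhat_0\mid\alpha]$ one column at a time, reducing everything to an elementary Gaussian moment computation. The one structural fact I would use is that $\alpha$ is formed from the disjoint measurement set $\y_k^{(00)},\A_k^{(00)}$, so conditioned on $\alpha$ it is a fixed scalar, independent of the matrices $\A_k\equiv\A_k^{(0)}$ that appear in $\Xhat_0$; hence every expectation below is over the $\A_k^{(0)}$'s alone, with $\alpha$ treated as a constant. Writing $\a_\ik{}^\top$ for the $i$-th row of $\A_k$ (so $\a_\ik\iidsim\n(0,\I_n)$) and $\y_\ik:=\a_\ik{}^\top\xstar_k$, the truncation rule gives that the $k$-th column of $\Xhat_0$ is
\[
\tfrac1m \A_k^\top \y_{k,trnc} \;=\; \tfrac1m\sum_{i=1}^m \a_\ik\,\y_\ik\,\indic_{\{\y_\ik^2\le\alpha\}}.
\]
So it suffices to evaluate, for a fixed vector $\x:=\xstar_k$, the quantity $\E\!\left[\a\,(\a^\top\x)\,\indic_{\{(\a^\top\x)^2\le\alpha\}}\right]$ with $\a\sim\n(0,\I_n)$, and then sum over $i$.

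For this I would use the rotational decomposition of a standard Gaussian vector. Put $\uu:=\x/\|\x\|$ and split $\a=(\a^\top\uu)\uu+\a_\perp$ with $\a_\perp:=(\I-\uu\uu^\top)\a$. Since $(\a^\top\uu,\a_\perp)$ is jointly Gaussian with $\mathrm{Cov}(\a^\top\uu,\a_\perp)=(\I-\uu\uu^\top)\uu=0$, the scalar $w:=\a^\top\uu\sim\n(0,1)$ is independent of $\a_\perp$; moreover $\a^\top\x=\|\x\|\,w$, so both $\a^\top\x$ and the indicator depend on $\a$ only through $w$. Hence
\[
\E\!\left[\a\,(\a^\top\x)\,\indic_{\{(\a^\top\x)^2\le\alpha\}}\right]
= \uu\,\E\!\left[w\,(\a^\top\x)\,\indic_{\{(\a^\top\x)^2\le\alpha\}}\right] + \E[\a_\perp]\,\E\!\left[(\a^\top\x)\,\indic_{\{(\a^\top\x)^2\le\alpha\}}\right].
\]
The second term vanishes since $\E[\a_\perp]=0$, and the first equals $\uu\,\|\x\|\,\E[w^2\,\indic_{\{\|\x\|^2 w^2\le\alpha\}}]=\x\,\beta(\alpha)$, where $\beta(\alpha):=\E[\zeta^2\indic_{\{\|\x\|^2\zeta^2\le\alpha\}}]$, $\zeta\sim\n(0,1)$. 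With $\x=\xstar_k$ this is exactly $\xstar_k\,\beta_k(\alpha)$.

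Summing over $i=1,\dots,m$ gives $\E[\A_k^\top\y_{k,trnc}\mid\alpha]=m\,\xstar_k\beta_k(\alpha)$, so the $k$-th column of $\E[\Xhat_0\mid\alpha]$ is $\xstar_k\beta_k(\alpha)$. Stacking the columns yields $\E[\Xhat_0\mid\alpha]=[\xstar_1\beta_1(\alpha),\dots,\xstar_q\beta_q(\alpha)]=\Xstar\D(\alpha)$ with $\D(\alpha)=\mathrm{diagonal}(\beta_k(\alpha),k\in[q])$, as claimed. There is no real obstacle here; the two points that require care are the conditioning step (invoking that $\alpha$ comes from an independent measurement split, so it factors out of the expectation), and the elementary but essential Gaussian computation that kills the component of $\a_\ik$ orthogonal to $\xstar_k$ and isolates the scalar factor $\beta_k(\alpha)$.
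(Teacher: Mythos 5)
Your proof is correct, and it is the standard argument. In the present note Lemma \ref{EX0} is imported by citation from \cite{lrpr_gdmin} without a proof being reproduced, so there is no in-paper proof to compare against; but the route you take (treat $\alpha$ as a fixed constant by the sample-splitting independence, compute the $k$-th column $\tfrac1m\sum_i \a_\ik\,\y_\ik\,\indic_{\{\y_\ik^2\le\alpha\}}$, decompose $\a_\ik$ into its component along $\uu=\xstar_k/\|\xstar_k\|$ and the independent orthogonal complement, kill the orthogonal piece by $\E[\a_\perp]=0$, and read off $\beta_k(\alpha)$ from the remaining scalar second moment) is exactly the canonical computation for this lemma. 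The two points you flag as requiring care, namely that $\alpha$ is measurable with respect to the disjoint split $(\y^{(00)},\A^{(00)})$ so it factors out of the conditional expectation, and that both $\a^\top\xstar_k$ and the truncation indicator depend on $\a$ only through $w=\a^\top\uu$, are indeed the only nontrivial steps, and you handle both correctly. One small stylistic remark: you could have argued the vanishing of the orthogonal term slightly more symmetrically by noting that $w\,\indic_{\{\|\xstar_k\|^2 w^2\le\alpha\}}$ is an odd function of $w$ so its mean is zero, rather than invoking $\E[\a_\perp]=0$, but both are equally valid.
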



\begin{fact}[\cite{lrpr_gdmin}] \label{sumyik_bnd}
Let
\[
\ev:= \left\{ \tC(1 - \epsilon_1) \frac{\|\Xstar\|_F^2}{q} \le \alpha \le \tC (1 + \epsilon_1) \frac{\|\Xstar\|_F^2}{q}  \right\}.
\]
$\Pr(\alpha \in \ev) \ge 1 - \exp(- \tc mq \epsilon_1^2)$.  Here  $\tc = c/\tC = c / \kappa^2 \mu^2.$
\end{fact}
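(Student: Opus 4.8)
The plan is to recognize $\alpha$ as a rescaled sum of $mq$ independent sub-exponential random variables and apply the sub-exponential Bernstein inequality (Theorem 2.8.1 of \cite{versh_book}) exactly once, in the spirit of Section~\ref{proof_grad_bnd_new} but without any epsilon-net. First I would observe that, since $\A_k$ has i.i.d.\ standard Gaussian entries and the $\A_k$ are mutually independent, each scalar measurement $\y_\ik = \a_\ik^\top \xstar_k$ is distributed as $\mathcal{N}(0,\|\xstar_k\|^2)$ and the family $\{\y_\ik : i\in[m],\,k\in[q]\}$ is mutually independent. Hence $|\y_\ik|^2 = \|\xstar_k\|^2\,\zeta_\ik^2$ with $\zeta_\ik \iidsim \mathcal{N}(0,1)$, so each $|\y_\ik|^2$ is sub-exponential with sub-exponential norm $K_\ik \le C\|\xstar_k\|^2$, and $\E\big[\sum_\ik |\y_\ik|^2\big] = m\sum_k\|\xstar_k\|^2 = m\|\Xstar\|_F^2$. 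Thus $\E[\alpha] = \tC\|\Xstar\|_F^2/q$ is exactly the center of the interval defining $\ev$, and $\{\alpha\notin\ev\}$ is precisely the event $\big|\sum_\ik\big(|\y_\ik|^2 - \E|\y_\ik|^2\big)\big| > t$ with $t := \epsilon_1 m\|\Xstar\|_F^2$.

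Next I would apply sub-exponential Bernstein with this $t$; the two quantities to control are $\sum_\ik K_\ik^2 \le C m \sum_k\|\xstar_k\|^4$ and $\max_\ik K_\ik \le C\max_k\|\xstar_k\|^2$. Using $\|\xstar_k\| = \|\Ustar\bstar_k\| = \|\bstar_k\|$ with Assumption~\ref{right_incoh} gives $\max_k\|\xstar_k\|^2 \le \mu^2 r\sigmax^2/q$, hence $\sum_k\|\xstar_k\|^4 \le (\max_k\|\xstar_k\|^2)\,\|\Xstar\|_F^2 \le (\mu^2 r\sigmax^2/q)\|\Xstar\|_F^2$. Substituting,
\[
\frac{t^2}{\sum_\ik K_\ik^2} \ \ge\ \frac{\epsilon_1^2 m q}{C\mu^2 r}\cdot\frac{\|\Xstar\|_F^2}{\sigmax^2}, \qquad \frac{t}{\max_\ik K_\ik} \ \ge\ \frac{\epsilon_1 m q}{C\mu^2 r}\cdot\frac{\|\Xstar\|_F^2}{\sigmax^2}.
\]
The remaining ingredient is $\|\Xstar\|_F^2 = \sum_{i=1}^{r}(\sigma^\star_i)^2 \ge r\,\sigmin^2 = r\,\sigmax^2/\kappa^2$, which replaces the $r$ in both denominators by $\kappa^2$; since $\epsilon_1\le 1$ the first (quadratic) term is the smaller of the two, so $\min(\cdot,\cdot)\ge c\,\epsilon_1^2 m q/(\kappa^2\mu^2)$ and Bernstein gives $\Pr(\alpha\notin\ev)\le 2\exp(-c\,\epsilon_1^2 m q/(\kappa^2\mu^2))$. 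Absorbing the factor $2$ into the constant yields $\Pr(\alpha\in\ev)\ge 1-\exp(-\tc\,m q\,\epsilon_1^2)$ with $\tc = c/(\kappa^2\mu^2)$, which, since $\tC=9\kappa^2\mu^2$, can be written as $\tc=c/\tC$.

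I do not expect a genuine obstacle: this is a single Bernstein application with no union bound and no covering argument. The only points that need a little care are (i) invoking $\|\Xstar\|_F^2 \ge r\sigmax^2/\kappa^2$ (rather than merely $\|\Xstar\|_F^2 \ge \sigmax^2$) so that the exponent constant comes out as $\kappa^2\mu^2$ instead of $\mu^2 r$, and (ii) tracking that the sub-exponential norms $K_\ik$ are column-dependent, so that the quantities entering Bernstein are $\sum_k\|\xstar_k\|^4$ and $\max_k\|\xstar_k\|^2$, both controlled through right incoherence, rather than cruder uniform bounds.
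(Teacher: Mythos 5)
This Fact is only cited in the present note from \cite{lrpr_gdmin}, with no proof given here, so there is no in-paper argument to compare against; your derivation is correct and is the standard (and almost certainly the intended) one. You correctly reduce $\alpha\in\ev$ to a two-sided deviation of $\sum_\ik|\y_\ik|^2$ from its mean $m\|\Xstar\|_F^2$ by $\eps_1 m\|\Xstar\|_F^2$, apply sub-exponential Bernstein once with $K_\ik\lesssim\|\xstar_k\|^2$, and close the gap between the naive exponent $mq\eps_1^2/(\mu^2 r)$ and the claimed $mq\eps_1^2/(\kappa^2\mu^2)=\tc\,mq\eps_1^2$ by pairing $\max_k\|\xstar_k\|^2\le\mu^2 r\sigmax^2/q$ (Assumption~\ref{right_incoh}) with $\|\Xstar\|_F^2\ge r\sigmin^2=r\sigmax^2/\kappa^2$, so the $r$ cancels exactly.
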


\begin{fact}[\cite{lrpr_gdmin}]
\label{betak_bnd}
		For any $\epsilon_1 \leq 0.1$, 
		$
 \min_k  \E\left[\zeta^2 \indic_{ \left\{ |\zeta| \leq \tC \frac{ \sqrt{1- \epsilon_1} \|\Xstar\|_F }{ \sqrt{q}\|\xstar_{k}\| } \right\} } \right] \geq 0.92.
	$
	\end{fact}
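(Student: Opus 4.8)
The plan is to separate the one-dimensional randomness from everything else at the outset. Since $\zeta$ is a fixed scalar standard Gaussian, the quantity $\E[\zeta^2 \indic_{\{|\zeta| \le \tau\}}]$ is a deterministic, nondecreasing function of the truncation threshold $\tau$ alone; in the claim the relevant threshold is $\tau_k := \tC \sqrt{1-\epsilon_1}\, \|\Xstar\|_F / (\sqrt{q}\, \|\xstar_k\|)$ (which is, up to a factor depending only on $\tC$, the value of $\sqrt{\alpha}/\|\xstar_k\|$ that governs $\beta_k(\alpha)$ of Lemma~\ref{EX0} on the high-probability event $\ev$ of Fact~\ref{sumyik_bnd}). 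So it suffices to (i) show that $\tau_k$ is bounded below by an absolute constant $\tau_0$ uniformly over $k$, and then (ii) check the purely scalar inequality $\E[\zeta^2 \indic_{\{|\zeta| \le \tau_0\}}] \ge 0.92$.

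For (i), I would bound the two norms in $\tau_k$ from opposite sides. Right incoherence (Assumption~\ref{right_incoh}), together with $\|\xstar_k\| = \|\Ustar\bstar_k\| = \|\bstar_k\|$, gives the upper bound $\|\xstar_k\| \le \mu\sqrt{r}\,\sigmax/\sqrt{q}$. For the matching lower bound on $\|\Xstar\|_F$ I would use $\|\Xstar\|_F^2 = \sum_{j=1}^r (\sigma_j^\star)^2 \ge r(\sigmin)^2$, i.e.\ $\|\Xstar\|_F \ge \sqrt{r}\,\sigmax/\kappa$. Dividing, the two factors of $\sqrt{r}$ cancel: $\|\Xstar\|_F/\|\xstar_k\| \ge \sqrt{q}/(\kappa\mu)$, and hence $\tau_k \ge \tC\sqrt{1-\epsilon_1}/(\kappa\mu) = 9\kappa\mu\sqrt{1-\epsilon_1} \ge 9\sqrt{0.9} > 8.5$, using $\tC = 9\kappa^2\mu^2$, $\kappa,\mu \ge 1$, and $\epsilon_1 \le 0.1$. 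So I may take $\tau_0 = 8.5$.

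For (ii), monotonicity gives $\beta_k = \E[\zeta^2\indic_{\{|\zeta| \le \tau_k\}}] \ge \E[\zeta^2\indic_{\{|\zeta| \le 8.5\}}] = \E[\zeta^2] - 2\int_{8.5}^{\infty} x^2\varphi(x)\,dx$, where $\varphi$ is the standard normal density and $\E[\zeta^2]=1$. Integration by parts via $x\varphi(x) = -\varphi'(x)$ gives $\int_{\tau}^{\infty} x^2\varphi(x)\,dx = \tau\varphi(\tau) + \Pr(\zeta > \tau)$, which at $\tau = 8.5$ is of order $10^{-15}$, far below the available slack $1 - 0.92 = 0.08$; since the bound is uniform in $k$, taking the minimum over $k$ then finishes the proof. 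The step that actually carries the argument — and the only point requiring care — is the lower bound $\|\Xstar\|_F \ge \sqrt{r}\,\sigmin$ in (i): the factor $\sqrt{r}$ must be produced to cancel the $\sqrt{r}$ coming from incoherence. The lazy bound $\|\Xstar\|_F \ge \sigmax$ would give only $\tau_k \gtrsim \kappa^2\mu/\sqrt{r}$, which shrinks with $r$ and would make the claimed constant $0.92$ fail for large $r$; everything else is a routine scalar Gaussian-tail estimate (indeed any $\tau_0 \gtrsim 3$ already suffices in step (ii)).
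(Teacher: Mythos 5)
Your proof is correct, and since the paper only cites Fact \ref{betak_bnd} from \cite{lrpr_gdmin} without reproducing the argument, there is nothing in the present source to compare against directly; the route you take — combine the incoherence upper bound $\|\xstar_k\|=\|\bstar_k\|\le\mu\sqrt{r}\,\sigmax/\sqrt{q}$ with $\|\Xstar\|_F^2=\sum_j(\sigma_j^\star)^2\ge r\,\sigmin^2$ so that the two $\sqrt{r}$ factors cancel and leave a threshold $\tau_k\ge\tC\sqrt{1-\epsilon_1}/(\kappa\mu)\ge 8.5$, then finish with the scalar identity $\E[\zeta^2\indic_{\{|\zeta|>\tau\}}]=2(\tau\varphi(\tau)+\Pr(\zeta>\tau))$ — is the natural one and is surely essentially what the cited paper does. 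You have also correctly identified the one place where a careless step would sink the argument: replacing $\|\Xstar\|_F\ge\sqrt{r}\,\sigmin$ by the weaker $\|\Xstar\|_F\ge\sigmax$ leaves $\tau_k\gtrsim\kappa^2\mu/\sqrt{r}$, which decays in $r$ and cannot give a dimension-free constant.

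One small discrepancy worth flagging, not in your proof but in the statement as transcribed: tracing through Fact \ref{sumyik_bnd} and the definition of $\beta_k$ in Lemma \ref{EX0}, the threshold that actually arises on the event $\alpha\in\ev$ is $\sqrt{\alpha}/\|\xstar_k\|\ge\sqrt{\tC}\,\sqrt{1-\epsilon_1}\,\|\Xstar\|_F/(\sqrt{q}\|\xstar_k\|)$, i.e.\ the prefactor should be $\sqrt{\tC}=3\kappa\mu$ rather than $\tC=9\kappa^2\mu^2$. This looks like a typo carried over into \eqref{minbetak}. Your argument is robust to it: with $\sqrt{\tC}$ the $\kappa\mu$ factors cancel exactly, giving $\tau_k\ge 3\sqrt{1-\epsilon_1}\ge 3\sqrt{0.9}\approx 2.85$, and a direct evaluation of $1-2(\tau\varphi(\tau)+\Pr(\zeta>\tau))$ at $\tau=2.85$ yields about $0.956$, which still clears $0.92$. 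Your closing remark that ``any $\tau_0\gtrsim 3$ already suffices'' is therefore exactly the right sharpness to keep in mind (at $\tau_0=2.5$ the value drops to about $0.90$, so one cannot afford to be much sloppier than the corrected constant allows).
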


\begin{lemma}[\cite{lrpr_gdmin}]  \label{init_bnd}
Fix $0 < \epsilon_1 < 1$. Then,  w.p. at least $1-\exp\left[(n+q)-c\epsilon_1^2mq/\mu^2\kappa^2\right]$, conditioned on $\alpha$, for an $\alpha \in \ev$,
		\[
		\|\Xhat_{0} -\E[\Xhat_{0}|\alpha]\| \leq 1.1 \epsilon_1 \|\Xstar\|_F
		\]
\end{lemma}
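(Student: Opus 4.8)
The plan is to write $\Xhat_0 - \E[\Xhat_0\mid\alpha]$ as a sum of $mq$ independent, centered, rank-one matrices, to bound its operator norm through the variational form $\max_{\norm{\w}=\norm{\z}=1}\w^\top(\cdot)\z$, to apply a sub-Gaussian (Hoeffding-type) tail bound for each fixed $\w,\z$, and to finish with an $\epsilon$-net union bound over $\w$ and $\z$. Because sample splitting makes the measurements $\A_k^{(0)},\y_k^{(0)}$ defining $\Xhat_0$ independent of the set used to compute $\alpha$, conditioning on $\alpha$ keeps each $\A_k^{(0)}$ standard Gaussian and makes the truncation level $\sqrt\alpha$ and the scalars $\beta_k(\alpha)$ of Lemma~\ref{EX0} fixed constants. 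Writing $y_\ik := \a_\ik^\top\xstar_k$ for the $i$-th entry of $\y_k$, Lemma~\ref{EX0} gives
\[
\Xhat_0 - \E[\Xhat_0\mid\alpha] \;=\; \frac1m\sum_\ik\Bigl(\a_\ik\,y_\ik\indic_{\{|y_\ik|\le\sqrt\alpha\}} - \xstar_k\,\beta_k(\alpha)\Bigr)\e_k^\top .
\]

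Next I would fix unit vectors $\w\in\mathbb R^n$, $\z\in\mathbb R^q$ and study $S := \sum_\ik (\w^\top\a_\ik)\,y_\ik\indic_{\{|y_\ik|\le\sqrt\alpha\}}\,z_k$, whose summands are independent. The crucial point is that truncation bounds the middle factor deterministically, $|y_\ik\indic_{\{|y_\ik|\le\sqrt\alpha\}}|\le\sqrt\alpha$, so each summand is a bounded random variable times the Gaussian $\w^\top\a_\ik$ (scaled by $z_k$), hence sub-Gaussian with $\psi_2$-norm $K_\ik\le C\sqrt\alpha\,|z_k|$, and centering changes this only by a constant factor. This is exactly why the truncated $\y_{k,trnc}$ is used in the definition of $\Xhat_0$: without truncation the summand would be a product of two Gaussians, i.e.\ only sub-exponential, and Bernstein's inequality would then carry an extra $t/\max_\ik K_\ik$ term, forcing a worse sample complexity.

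Then I would apply the general Hoeffding inequality for sums of independent centered sub-Gaussians (e.g.\ Thm.~2.6.3 of \cite{versh_book}) with $t = c_0\, m\epsilon_1\norm{\Xstar}_F$ for a small constant $c_0$. Since $\sum_\ik K_\ik^2 \le C\alpha\, m\sum_k z_k^2 = C\alpha m$, and since $\alpha\in\ev$ gives $\alpha \le \tC(1+\epsilon_1)\norm{\Xstar}_F^2/q \le C\kappa^2\mu^2\norm{\Xstar}_F^2/q$, we get $t^2/\sum_\ik K_\ik^2 \ge c\,mq\epsilon_1^2/(\kappa^2\mu^2)$, so that for each fixed $\w,\z$,
\[
\Pr\Bigl(\tfrac1m|S-\E S|\ge c_0\epsilon_1\norm{\Xstar}_F\Bigr)\le 2\exp\!\bigl(-c\,mq\epsilon_1^2/(\kappa^2\mu^2)\bigr).
\]
A standard $\tfrac14$-net argument over the unit spheres of $\mathbb R^n$ and $\mathbb R^q$ (nets of sizes $\le C^n$ and $\le C^q$; cf.\ \cite[Prop.~4.7]{lrpr_gdmin}) then upgrades this, after choosing $c_0$ small enough, to $\norm{\Xhat_0 - \E[\Xhat_0\mid\alpha]}\le 1.1\epsilon_1\norm{\Xstar}_F$, with the union bound contributing a factor $\exp(C(n+q))$; absorbing this into the exponent yields the claimed failure probability $\exp\bigl[(n+q)-c\epsilon_1^2mq/(\mu^2\kappa^2)\bigr]$.

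The Gaussian moment computations behind $\beta_k$ and the net bookkeeping are routine; the only real idea is the one highlighted above — noticing that truncation turns the otherwise sub-exponential summands into bounded-times-Gaussian, hence sub-Gaussian, so a single concentration term of order $mq\epsilon_1^2/(\kappa^2\mu^2)$ suffices.
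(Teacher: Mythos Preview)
Your proof sketch is correct and follows what is essentially the standard argument for this type of truncated spectral initialization bound. The key insight you identify --- that truncation converts the otherwise sub-exponential summands $(\w^\top\a_\ik)(\a_\ik^\top\xstar_k)$ into sub-Gaussian ones because $|y_\ik\indic_{\{|y_\ik|\le\sqrt\alpha\}}|\le\sqrt\alpha$ deterministically --- is precisely the point of the truncation, and your sub-Gaussian norm bound $K_\ik\le C\sqrt\alpha\,|z_k|$ is right (note that although $\w^\top\a_\ik$ and $y_\ik$ are correlated, the product is still dominated by $\sqrt\alpha\,|\w^\top\a_\ik|$, so the $\psi_2$ bound follows). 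The Hoeffding computation and the $\epsilon$-net over $\mathbb{R}^n\times\mathbb{R}^q$ are exactly as you describe.

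The present paper does not actually prove this lemma; it is quoted verbatim from \cite{lrpr_gdmin} as one of four imported results in Section~\ref{proofs_init}. Your argument matches the approach of that reference, so there is nothing to contrast.
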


Facts \ref{sumyik_bnd} and \ref{betak_bnd} together imply that, w.p. at least $1 - \exp(- \tc mq \epsilon_1^2)$,
\bea
\min_k \beta_k(\alpha) \ge \min_k  \E\left[\zeta^2 \indic_{\{ |\zeta| \leq \tC \frac{ \sqrt{1- \epsilon_1} \|\Xstar\|_F }{ \sqrt{q}\|\xstar_{k}\| } \} } \right]  \ge 0.92.
\label{minbetak}
\eea
The first inequality is an immediate consequence of Fact \ref{sumyik_bnd} ($\alpha \in \ev$) and the second follows by Fact \ref{betak_bnd}.

By setting $\eps_1 = \eps_0/ 1.1 \sqrt{r} \kappa$ in Lemma \ref{init_bnd}, and using Fact \ref{sumyik_bnd}, we get the following corollary.
\begin{corollary}\label{init_bnd_cor}
Fix $0 < \epsilon_1 < 1$. Then, w.p. at least $1-\exp\left[(n+q)- c \frac{\eps_0^2mq}{\mu^2\kappa^4 r} \right] -   \exp(- c mq \epsilon_1^2 / \kappa^2 \mu^2)$,
\[
		\|\Xhat_{0} -\E[\Xhat_{0}|\alpha]\| \le \eps_0 \sigmin
\]
with $\E[\Xhat_0 |\alpha]$ being as given on Lemma \ref{EX0}.
\end{corollary}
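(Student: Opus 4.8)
The plan is to combine Lemma \ref{init_bnd} with a single change of free parameter and the crude Frobenius-norm bound $\|\Xstar\|_F \le \sqrt{r}\,\sigmax = \sqrt{r}\,\kappa\,\sigmin$ (which holds because $\Xstar$ has rank $r$, so $\|\Xstar\|_F^2 = \sum_{i=1}^r \sigma_i^2 \le r\,\sigmax^2$). First I would invoke Fact \ref{sumyik_bnd} to guarantee that the truncation threshold $\alpha$ lies in the good event $\ev$; this holds with probability at least $1 - \exp(-\tc m q \epsilon_1^2) = 1 - \exp(-c\, m q \epsilon_1^2/\kappa^2\mu^2)$, which is exactly the second subtracted term in the stated probability. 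Everything that follows is carried out conditioned on $\{\alpha \in \ev\}$.

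Next I would apply Lemma \ref{init_bnd} with its internal accuracy parameter (call it $\epsilon_1'$, to keep it separate from the $\epsilon_1$ that parametrizes $\ev$) set to $\epsilon_1' := \eps_0 / (1.1\sqrt{r}\,\kappa)$. As long as $\eps_0$ is at most a constant (indeed $\eps_0 < 1.1\sqrt{r}\kappa$ suffices), we have $\epsilon_1' < 1$, so the hypothesis of the lemma is met. The lemma then yields, conditioned on $\alpha \in \ev$ and with probability at least $1 - \exp[(n+q) - c(\epsilon_1')^2 m q/\mu^2\kappa^2]$, the bound
\[
\|\Xhat_0 - \E[\Xhat_0 \mid \alpha]\| \le 1.1\,\epsilon_1'\,\|\Xstar\|_F = \frac{\eps_0}{\sqrt{r}\,\kappa}\,\|\Xstar\|_F .
\]
Substituting $\|\Xstar\|_F \le \sqrt{r}\,\kappa\,\sigmin$ collapses the right-hand side to $\eps_0\,\sigmin$, which is the claimed deviation bound.

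Finally I would tidy up the exponent: $(\epsilon_1')^2 = \eps_0^2/(1.21\,r\,\kappa^2)$, hence $c(\epsilon_1')^2 m q/\mu^2\kappa^2 = c'\,\eps_0^2 m q/(\mu^2\kappa^4 r)$ after absorbing the numerical factor $1/1.21$ into the constant, which matches the first subtracted term. A union bound over the failure of $\{\alpha \in \ev\}$ and the failure of the concentration bound in Lemma \ref{init_bnd} gives the total failure probability $\exp[(n+q) - c\,\eps_0^2 m q/(\mu^2\kappa^4 r)] + \exp(-c\, m q \epsilon_1^2/\kappa^2\mu^2)$, completing the argument. There is no genuine obstacle here; the only points requiring care are (i) keeping the two roles of ``$\epsilon_1$'' distinct (the one controlling $\alpha \in \ev$ versus the one fed into Lemma \ref{init_bnd}), and (ii) deliberately using the loose bound $\|\Xstar\|_F \le \sqrt{r}\,\kappa\,\sigmin$ rather than attempting a tighter estimate, since the $\sqrt{r}$ loss is already anticipated in the choice $\epsilon_1' = \eps_0/(1.1\sqrt{r}\kappa)$.
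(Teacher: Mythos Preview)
Your proposal is correct and matches the paper's own argument essentially line for line: the paper also sets the accuracy parameter in Lemma~\ref{init_bnd} to $\eps_0/(1.1\sqrt{r}\kappa)$, uses $\|\Xstar\|_F \le \sqrt{r}\kappa\sigmin$ to convert $1.1\epsilon_1'\|\Xstar\|_F$ into $\eps_0\sigmin$, and invokes Fact~\ref{sumyik_bnd} for the $\alpha\in\ev$ event. Your explicit separation of the two ``$\epsilon_1$'' roles is a clarification the paper leaves implicit.
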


\subsection{Obtaining the SD error bound for initialization}

By  Lemma \ref{EX0}, $\E[\X_0 |\alpha] = \Xstar \D(\alpha)$ with $\D(\alpha)$ as defined there. Clearly, its rank is $r$ or less.
To obtain the bound, we apply Wedin's $\sin \theta$ theorem \cite{wedin} for $\SE_2$ with $\M = \X_0$, $\M^* = \E[\X_0 |\alpha] = \Xstar \D$. Recall that $\U_0$ is the matrix of top $r$ singular vectors of $\X_0$. Also, the span of top $r$ singular vectors of $\E[\X_0|\alpha]= \Xstar \D$ equals the column span of $\Ustar$. Thus applying Wedin will help us bound $\SE_2(\U_0,\Ustar)$.
To do this, we need to define the SVD of $\E[\X_0 |\alpha]$. Let
\[
\E[\X_0 |\alpha]=\Xstar \D(\alpha) \svdeq (\Ustar \Q) \check\bSigma \Vcheck
\]
where $\Q$ is a $r \times r$ unitary matrix, $\check\bSigma$ is an $r \times r$ diagonal matrix with non-negative entries (singular values) and $\Vcheck$ is an $r \times q$ matrix with orthonormal rows, i.e. $\Vcheck \Vcheck^\top = \I$.
Observe also that $\sigma_{r+1}(\E[\X_0 |\alpha]) = 0$ since it is a rank $r$ matrix.
Also, from above,
\[
\sigma_r(\E[\X_0 |\alpha]) = \sigmamin(\check\bSigma) \ge \sigmin \sigmamin(\D) = \sigmin \min_k \beta_k(\alpha)
\]
This follows since $\check\bSigma =\Q^\top \Bstar \D \Vcheck^\top $
and so
$\sigmamin(\check\bSigma) \ge
\sigmamin(\Bstar \D \Vcheck^\top)
\ge  \sigmamin(\Bstar \D) \cdot 1
\ge  \sigmamin(\D \Bstar^\top)
\ge   \sigmamin(\D) \cdot \sigmin
$ and $\sigmamin(\D) = \min_k \beta_k$. The last equality follows since $\D$ is diagonal with entries $\beta_k$.
%
%
Thus,
\begin{fact}
$\sigma_r(\E[\X_0 |\alpha]) \ge \sigmin \min_k \beta_k(\alpha)$,  $\sigma_{r+1}(\E[\X_0 |\alpha]) = 0$.
\label{fact1}
\end{fact}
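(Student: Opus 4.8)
The plan is to combine Lemma~\ref{EX0} with two elementary facts about singular values, since Fact~\ref{fact1} is a deterministic linear-algebra statement once the expression for $\E[\X_0|\alpha]$ is in hand. By Lemma~\ref{EX0}, $\E[\X_0|\alpha] = \Xstar \D(\alpha)$ where $\D(\alpha)$ is the $q\times q$ diagonal matrix with nonnegative diagonal entries $\beta_k(\alpha)$. Writing $\Xstar = \Ustar \Bstar$ with $\Ustar$ having orthonormal columns, we have $\E[\X_0|\alpha] = \Ustar \Bstar \D(\alpha)$, and since $\Bstar\D(\alpha)$ is $r\times q$ this matrix has rank at most $r$; hence $\sigma_{r+1}(\E[\X_0|\alpha]) = 0$ immediately, which is the second claim.

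For the first claim I would argue $\sigma_r(\E[\X_0|\alpha]) = \sigma_r(\Bstar \D(\alpha))$, using that left-multiplication by the orthonormal-column matrix $\Ustar$ leaves all singular values unchanged (indeed $(\Ustar M)^\top(\Ustar M) = M^\top M$). Since $\Bstar\D(\alpha)$ is $r\times q$, its $r$-th singular value is its smallest one, i.e. $\sigmamin(\Bstar\D(\alpha))$. Then $\sigmamin(\Bstar\D(\alpha)) = \sigmamin((\Bstar\D(\alpha))^\top) = \sigmamin(\D(\alpha)\Bstar^\top)$, where $\D(\alpha)\Bstar^\top$ is a $q\times r$ tall matrix; applying the product bound $\sigmamin(\B\C)\ge\sigmamin(\B)\sigmamin(\C)$ valid for tall/square $\B\C$ (recorded in the preliminaries above) with $\B = \D(\alpha)$ and $\C = \Bstar^\top$ gives $\sigmamin(\D(\alpha)\Bstar^\top) \ge \sigmamin(\D(\alpha))\,\sigmamin(\Bstar^\top)$. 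Finally $\sigmamin(\D(\alpha)) = \min_k \beta_k(\alpha)$ because $\D(\alpha)$ is diagonal with the nonnegative entries $\beta_k(\alpha)$, and $\sigmamin(\Bstar^\top) = \sigma_r(\Bstar) = \sigmamin(\bSigma) = \sigmin$ because $\Bstar = \bSigma\Vstar$ with $\Vstar$ having orthonormal rows. Chaining these inequalities yields $\sigma_r(\E[\X_0|\alpha]) \ge \sigmin \min_k\beta_k(\alpha)$.

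Equivalently, and this appears to be the route the excerpt takes, one can first fix an SVD $\Xstar\D(\alpha) \svdeq (\Ustar\Q)\check\bSigma\Vcheck$ with $\Q$ an $r\times r$ unitary and $\Vcheck$ having orthonormal rows, observe $\check\bSigma = \Q^\top\Bstar\D(\alpha)\Vcheck^\top$, and then run the same chain of minimum-singular-value inequalities on $\sigmamin(\check\bSigma)$, stripping $\Q^\top$ and $\Vcheck^\top$ by orthonormality; the content is identical.

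I do not expect a real obstacle here. The only thing to watch is keeping track of which factor carries orthonormal columns versus orthonormal rows, so that the reduction ``$\sigma_r = \sigmamin$'' and the product inequality are each applied on the correct (tall or square) side. The genuinely substantive estimates — lower bounding $\min_k\beta_k(\alpha)$ by $0.92$ on a high-probability event (Facts~\ref{sumyik_bnd}--\ref{betak_bnd}) and controlling the deviation $\|\Xhat_0 - \E[\Xhat_0|\alpha]\|$ (Lemma~\ref{init_bnd} and Corollary~\ref{init_bnd_cor}) — are separate results already quoted and are not needed to establish this Fact.
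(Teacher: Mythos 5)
Your proof is correct and follows essentially the same chain of $\sigmamin$ reasoning as the paper: strip orthonormal factors, transpose to reach a tall product, apply the $\sigmamin(\B\C)\ge\sigmamin(\B)\sigmamin(\C)$ bound, and read off $\sigmamin(\D)=\min_k\beta_k$ and $\sigmamin(\Bstar^\top)=\sigmin$. Your primary variant, which uses $\sigma_r(\Ustar\Bstar\D)=\sigma_r(\Bstar\D)$ directly (since $(\Ustar M)^\top(\Ustar M)=M^\top M$) and never introduces the SVD $\Xstar\D\svdeq(\Ustar\Q)\check\bSigma\Vcheck$, is a modest simplification that also sidesteps the paper's intermediate step $\sigmamin(\check\bSigma)\ge\sigmamin(\Bstar\D\Vcheck^\top)\ge\sigmamin(\Bstar\D)\cdot 1$, which in fact holds as an equality because $\Vcheck$ spans the top right-singular subspace of $\Bstar\D$ (right-multiplying a broad matrix by a generic orthonormal-column $q\times r$ matrix would only give $\le$, not $\ge$), rather than as an instance of the generic tall-matrix product bound.
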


Applying Wedin's $\sin \theta$ theorem, and then using Corollary \ref{init_bnd_cor}, Fact \ref{fact1} and \eqref{minbetak}, we get
\begin{align*}
& \SE_2(\U_0, \Ustar)  \\
& \le \sqrt{2} \frac{\max(\|(\X_0 - \E[\X_0|\alpha])^\top \Ustar\|, \|(\X_0 - \E[X_0|\alpha]) \Vcheck\|)}{\sigmin \min_k \beta_k(\alpha) - 0 - \|\X_0 - \E[X_0|\alpha]\|} \\
& \lesssim \sqrt{2} \frac{ \eps_0 \sigmin}{0.92\sigmin -    \eps_0 \sigmin}  \lesssim \sqrt{2} \frac{ \eps_0}{0.9} < 1.6 \eps_0
\end{align*}
if $\eps_0 < 0.02$ and  $mq \gtrsim (n+q) r / \eps_0^2$.
In the above we used  $\|(\X_0 - \E[\X_0|\alpha])^\top \Ustar\| \le \|\X_0 - \E[X_0|\alpha]\| \cdot 1$ and $\|(\X_0 - \E[\X_0|\alpha]) \Vcheck\| \le \|\X_0 - \E[X_0|\alpha]\| \cdot 1$.
Setting $\eps_0 = 0.5 \delta_0$ with  $\delta_0 < 0.02$, we obtain our desired result.

\begin{theorem}\label{init_claim}
Assume that Assumption \ref{right_incoh} holds.
Pick a $\delta_0 \le 0.02$.
If $mq \gtrsim \kappa^4 \mu^2 (n+q) r / \delta_0^2 $ then  whp
\[
\SE_2(\U_0, \Ustar) \le \delta_0
\]
\end{theorem}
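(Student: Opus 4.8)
The statement is a matrix perturbation bound, so the plan is to apply Wedin's $\sin\theta$ theorem to $\M=\X_0$ and $\M^\star = \E[\X_0\mid\alpha]$. By Lemma~\ref{EX0}, $\E[\X_0\mid\alpha] = \Xstar\D(\alpha)$ with $\D(\alpha)$ a nonnegative $q\times q$ diagonal matrix, so this matrix has rank $\le r$ and column span inside $\Span(\Ustar)$; its top-$r$ left singular subspace is therefore exactly $\Span(\Ustar)$, which is precisely the subspace $\U_0$ estimates. Fixing an SVD $\E[\X_0\mid\alpha]\svdeq(\Ustar\Q)\check\bSigma\Vcheck$ with $\Q$ an $r\times r$ unitary, $\check\bSigma$ an $r\times r$ nonnegative diagonal, and $\Vcheck\Vcheck^\top=\I$, Wedin bounds $\SE_2(\U_0,\Ustar)$ by $\sqrt2$ times the larger of the one-sided perturbations $\|(\X_0-\E[\X_0\mid\alpha])^\top\Ustar\|$ and $\|(\X_0-\E[\X_0\mid\alpha])\Vcheck\|$, over a gap that, since $\sigma_{r+1}(\E[\X_0\mid\alpha])=0$ and by Weyl, is at least $\sigma_r(\E[\X_0\mid\alpha]) - \|\X_0-\E[\X_0\mid\alpha]\|$. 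Because $\Ustar$ and $\Vcheck^\top$ have orthonormal columns, both one-sided terms are at most $\|\X_0-\E[\X_0\mid\alpha]\|$, and this operator norm is exactly what Corollary~\ref{init_bnd_cor} controls: feeding Lemma~\ref{init_bnd} with parameter $\eps_0/(1.1\sqrt r\kappa)$ and using $\|\Xstar\|_F\le\sqrt r\,\sigmax$ turns the Frobenius-scale deviation into $\|\X_0-\E[\X_0\mid\alpha]\| \le \eps_0\sigmin$, off an event of probability at most $n^{-10}$ once $mq\gtrsim\kappa^4\mu^2(n+q)r/\eps_0^2$.

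For the denominator I would lower-bound $\sigma_r(\E[\X_0\mid\alpha])$ through the SVD form: $\check\bSigma = \Q^\top\Bstar\D(\alpha)\Vcheck^\top$, so pushing the inequality $\sigmamin(\B_1\B_2)\ge\sigmamin(\B_1)\sigmamin(\B_2)$ (valid when the relevant factor is square or tall) through the unitary $\Q$ and the orthonormal-row factor $\Vcheck$, and then through $\Bstar\D=(\D\Bstar^\top)^\top$, gives $\sigma_r(\E[\X_0\mid\alpha]) \ge \sigmamin(\D(\alpha))\,\sigmin = (\min_k\beta_k(\alpha))\,\sigmin$ — this is Fact~\ref{fact1}. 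The only genuinely probabilistic piece remaining, and the step I expect to need the most care, is bounding $\min_k\beta_k(\alpha)$ away from zero, because the truncation level $\alpha$ is itself a function of the data. This is handled by the sample splitting built into the algorithm: $\alpha$ is formed from $\y_k^{(00)},\A_k^{(00)}$, independent of the $\y_k^{(0)},\A_k^{(0)}$ used for $\X_0$, so one conditions on $\alpha$ throughout; on the event $\{\alpha\in\ev\}$ of Fact~\ref{sumyik_bnd} (probability at least $1-\exp(-c\,mq/(\kappa^2\mu^2))$ when its parameter is a fixed constant $\le 0.1$), Fact~\ref{betak_bnd} yields $\min_k\beta_k(\alpha)\ge0.92$, i.e.\ \eqref{minbetak}.

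Putting the pieces into Wedin — $\sigma_r(\E[\X_0\mid\alpha])\ge0.92\sigmin$, $\sigma_{r+1}(\E[\X_0\mid\alpha])=0$, and perturbation $\le\eps_0\sigmin$ — gives $\SE_2(\U_0,\Ustar)\le\sqrt2\,\eps_0\sigmin/(0.92\sigmin-\eps_0\sigmin)<1.6\,\eps_0$ as soon as $\eps_0<0.02$. Choosing $\eps_0=\delta_0/2$ and union-bounding the failure events of Corollary~\ref{init_bnd_cor} and Fact~\ref{sumyik_bnd}, both of which are at most $n^{-10}$ under $mq\gtrsim\kappa^4\mu^2(n+q)r/\delta_0^2$, delivers $\SE_2(\U_0,\Ustar)\le\delta_0$ whp. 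In short, no new idea is needed beyond Wedin plus the facts already quoted from \cite{lrpr_gdmin}; the work is bookkeeping — making the $\min_k\beta_k$ lower bound and the operator-norm concentration share one high-probability event through the conditioning on $\alpha$, and checking that the sole source of the $\sqrt r$ is the conversion $\|\Xstar\|_F\le\sqrt r\,\sigmax$, which is exactly why the requirement is $(n+q)r$ rather than $(n+q)r^2$.
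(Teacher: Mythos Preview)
Your proposal is correct and follows essentially the same route as the paper: apply Wedin's $\sin\theta$ theorem to $\X_0$ versus $\E[\X_0\mid\alpha]=\Xstar\D(\alpha)$, lower-bound $\sigma_r(\E[\X_0\mid\alpha])\ge(\min_k\beta_k)\sigmin$ via the $\sigmamin$ chain through the SVD factors (Fact~\ref{fact1}), control the perturbation by Corollary~\ref{init_bnd_cor}, and handle $\min_k\beta_k\ge0.92$ via \eqref{minbetak} on the event $\alpha\in\ev$ using the sample-splitting independence. The final numerical combination and the choice $\eps_0=\delta_0/2$ match the paper exactly.
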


\appendices

\section{Proof of Theorem \ref{main_res}} \label{proof_main_res}
Theorem \ref{init_claim} tells us that $\SE_2(\U_0, \Ustar) \le \delta_0$ whp if $m_0 q \gtrsim (n+q) r / \delta_0^2 $.
Theorem \ref{gd_claim} tells us that if $\delta_t \le 0.02 / \sqrt{r} \kappa^2$, and if $m_1 q \gtrsim nr$,  then, whp, $\delta_t$ reduces by a factor of $(1 - 0.6 c_\eta / \kappa^2 )$ at each iteration. In particular, this implies that $\delta_t \le \delta_0$. Thus, in order to apply Theorem \ref{gd_claim}, it suffices to require $\delta_0 = 0.02 / \sqrt{r} \kappa^2$.

Combining both results, we have shown that if
$m_0 q \ge C \kappa^8 \mu^2 (n+q)r^2 $ and if $m_1 q \ge C \kappa^4 mu^2 nr $ and $m_1 \ge C \max(r,\log n, \log r)$, and if $\eta = c_\eta / (m \sigmax^2)$ with $c_\eta < 0.5$, then, whp, at each $t \ge 0$,
\[
\SE_2(\U_t, \Ustar) \le \delta_t:= (1 - 0.6 \frac{c_\eta}{ \kappa^2} )^t \delta_0 = (1 - 0.6 \frac{c_\eta}{ \kappa^2} )^t \frac{0.02}{\sqrt{r} \kappa^2}
\]
and all bounds of Theorem \ref{Blemma} hold with $\delta_t$ as above.

Thus, to guarantee $\SE_2(\U_T, \Ustar) \le \eps$, we need
\[
T \ge C \frac{\kappa^2}{c_\eta} \log(1 / \eps )
\]
This follows by using $\log(1-x) < - x$ for $|x|<1$ and using $\kappa^2 \sqrt{r} \ge 1$.
Thus, setting $c_\eta = 0.4$, our sample complexity $m = m_0 + m_1 T$ becomes
$
mq \ge C \kappa^8 \mu^2 (n+q) r ( 1 + \log (1/\eps) ),
$
and $ m \ge C \max(r, \log q, \log n) \log (1/\eps)$.

%

%

\section{Proof of Lemma \ref{bhat_lemma}} \label{proof_bhat_lemma}
Using the expression for $\b_k$ and simplifying it,
\[
\b_k - \g_k = \M^{-1} \U^\top \A_k^\top \A_k (\I - \U \U^\top) \Ustar \bstar_k
\]
with $\M:= \U^\top \A_k^\top \A_k \U$.

Clearly,
\[
\E[\M] = m \U^\top \U = m \I_r, \  \E[\U^\top \A_k^\top \A_k (\I - \U \U^\top) \Ustar \bstar_k ] = 0
\]
Thus, using the standard technique (sub-expo Bern ineq followed by an epsilon-net argument), one can show that
\ben
\item w.p. at least $1 - \exp(r -  \eps_2 m)$,
\[
\|\M - \I_r\| \le 1.1 \eps_2 m
\]
This implies of course that $\sigmin(\M) \ge (1- 1.1 \eps_2)m$ and thus $\|\M^{-1}\|\le 1/ ((1-1.1\eps_2)m)$.

\item w.p. at least $1 - \exp(r -  \eps_3 m)$,
\[
\|  \U^\top \A_k^\top \A_k (\I - \U \U^\top) \Ustar \bstar_k \| \le 1.1 \eps_3 m \|(\I - \U \U^\top) \Ustar \bstar_k\|
\]
\een
Thus, setting $ \eps_2 = 0.1$, and taking union bound over all $q$ vectors,  w.p. at least $1 - q \exp(r -  \eps_3 m)$,
\[
\|\b_k - \g_k\| \le 1.2 \eps_3 \|(\I - \U \U^\top) \Ustar \bstar_k\| \forall k \in [q]
\]

\section*{Author Biographies}

{\bf Namrata Vaswani (Email: namrata@iastate.edu)} is the Anderlik Professor of Electrical and Computer Engineering at Iowa State University. She received a Ph.D. from the University of Maryland, College Park in 2004 and a B.Tech from IIT-Delhi in India in 1999. Her research interests are in data science, with a particular focus on statistical machine learning for signal processing and computational imaging. Vaswani is also the director of the CyMath program at Iowa State in which graduate students provided school-year-long math tutoring for under-served grade school students.

Vaswani has served as an Associate Editor or Area Editor for IEEE Transactions on Information Theory, IEEE Transactions on Signal Processing, and the Signal Processing Magazine. She has also guest-edited special issues for Proceedings of the IEEE and the IEEE Journal of Selected Topics in Signal Processing. Vaswani is a recipient of the 2014 IEEE Signal Processing Society Best Paper Award, the Iowa State Early Career Engineering Faculty Research Award (2014), the Iowa State Mid-Career Achievement in Research Award (2019). She is a Fellow of the IEEE, class of 2019.




\bibliographystyle{IEEEtran}

\bibliography{../bib/tipnewpfmt_kfcsfullpap}

\end{document}